\newtheorem{theorem}{Theorem}[section]
\newtheorem{corollary}[theorem]{Corollary}
\newtheorem{definition}[theorem]{Definition}
\newtheorem{example}{Example}
\newcommand{\hide}[1]{}
\title{Economic Viability of Paris Metro Pricing for Digital Services\footnote{This paper appears in ACM Transactions on Internet Technology (ToIT), Special Issue on Pricing and Incentives in Networks and Systems, Vol. 14, No. 12, Issue 2-3, pp12:1-12:21, Oct 2014. A preliminary version has been presented at IEEE INFOCOM 2010 \cite{CWC10pmp}.}}
\author{Chi-Kin Chau}
\affil{Masdar Institute of Science and Technology}
\author{Qian Wang}
\affil{Alibaba, Inc.}
\author{Dah-Ming Chiu}
\affil{Chinese University of Hong Kong}
\begin{document}

\maketitle

\begin{abstract}
Nowadays digital services, such as cloud computing and network access services, allow dynamic resource allocation and virtual resource isolation. This trend can create a new paradigm of flexible pricing schemes. A simple pricing scheme is to allocate multiple isolated service classes with differentiated prices, namely Paris Metro Pricing (PMP). The benefits of PMP are its simplicity and applicability to a wide variety of general digital services, without considering specific performance guarantees for different service classes. The central issue of our study is whether PMP is economically viable, namely whether it will produce more profit for the service provider and whether it will achieve more social welfare. Prior studies had only considered specific models and arrived at conflicting conclusions. In this article, we identify unifying principles in a general setting and derive general sufficient conditions that can guarantee the viability of PMP. We further apply the results to analyze various examples of digital services.    
\end{abstract}

{\bf Keywords:} Internet Economics; Pricing; Service Classes; Cloud Computing Services

\section{Introduction} \label{sec:intro} 

The management of digital services, such as cloud computing, video streaming and gaming services, and network access services in wireline and wireless networks, is increasingly dictated by economic principles. Digital services possess several salient characteristics that allow more flexible resource management and allocation mechanisms that pave the way for a variety of innovative pricing schemes. 

First, there emerge new technologies for virtual resource isolation, by which resources can be allocated conveniently (either dynamically or a priori) to create isolated service classes without altering the underlying hardware infrastructure. Virtual resource isolation can be realized by a variety of technologies. For example, computational virtualization can create virtual machines (VMs) to abstract the underlying hardware resources from applications. Each VM can execute isolated computing tasks without affecting others. On the other hand, software-defined radio has been utilized at base stations to enable dynamic spectrum allocation for different standards in wireless access networks, whereas multiprotocol label switching (MPLS) can assign different routing and queuing operations for packets depending on specific applications. With virtual resource isolation, multiple service classes can be conveniently established. 

Second, the availability of real-time measurement of services enables users to observe the performance of other service classes. Users may make instantaneous decisions and adjustment to their behavior according to the observed performance. In the presence of multiple service classes, users who cannot tolerate the performance in the respective class may choose to switch between classes as a consequence. Third, pricing models are evolving so as to be more flexible. The rise of on-demand pricing models, such as pay-per-usage, allows users to readily opt out of a service. 

Therefore, it is natural to consider pricing schemes for multiple service classes in digital services. Nowadays, differentiated pricing with multiple service classes has been observed in practice. For example, streaming and file sharing service providers offer premier and economic service classes that can run simultaneously on the same cloud computing platform with isolated VMs and partitioned bandwidth. In mobile wireless service, 3G and 4G can potentially share a spectrum allocated dynamically at base stations. A simple pricing scheme is to impose differentiated prices at different service classes without fulfilling specific quality-of-service requirements and let users spontaneously opt for the appropriate service classes according to their experienced performance. For example, a premier service class charged at a higher price will attract fewer low-end users and therefore is able to provide superior performance to the high-end users. This gives rise to so-called Paris Metro Pricing (PMP), a simple multiclass flat-rate pricing scheme.

PMP is an attractive approach due to its simplicity. However, a more subtle question is whether PMP is economically viable and specifically, {\em whether we can improve the profit and social welfare through a suitable pricing scheme on service classes with an appropriately allocated amount of resources}. Prior studies arrived at conflicting conclusions for this question. On the one hand, \cite{gibbens00JSAC,jain01pmp} found PMP to be viable, while on the other, based on a similar model, \cite{rt04pmp} numerically showed that PMP may not be more viable than flat-rate pricing. 
 
In this article, we observe unifying principles that depend on the nature of the externality of the underlying service models. \cite{gibbens00JSAC,jain01pmp} assumed one type of congestion function, whereas \cite{rt04pmp} assumed another, hence reaching conflicting conclusions. We consider a general model that can capture a wide variety of digital services and we provide general sufficient conditions for the viability of PMP in terms of both social welfare as well as provider profit. This leads to the insights on \emph{why} PMP is or is not viable, understandable by common practitioners. We further apply our results to analyze various examples of digital services.   


{\bf Outline}: Sec.~\ref{sec:relate} provides the background and related work. Sec.~\ref{sec:model} formulates a general model of PMP, and gives several examples of digital services that it captures. Sec.~\ref{sec:res} gives an overview of our analytic results considering monopoly. We then extend our study to duopoly, supported by numerical studies in Sec.~\ref{sec:duo}. 

\section{Background and Related Work} \label{sec:relate} 

PMP was first proposed by Odlyzko \cite{odlyzko98pmp} as a simple pricing model for the Internet-differentiated services and can better satisfy users with different aversion to effects of congestion as divided by different service classes. 
The scheme is inspired by the convention used by Paris metro at one time\footnote{This scheme is actually adopted rather widely in other transportation systems in the world, including the Mass -Transit Railway in Hong Kong.}: The first- and second-class cars are charged with different prices, although physically the cars are the same (in terms of the number and quality of the seats). Since fewer people would pay more for the first-class fare, it is also less congested. Thus, users more concerned about getting a seat can opt for first class, and more cost conscious users can opt for second. Note that, in each class, the user is still paying a flat rate. PMP has a self-stabilizing property, namely that if the performance of first class deteriorates, some users will switch to second class, thus increasing the quality differential between the classes.

In general, the main types of pricing schemes adopted in practice include\footnote{{One can also apply a static (time-invariant) pricing strategy as well as a dynamic pricing strategy (that is contingent on the history) to these pricing schemes. Also, the pricing strategy may be application-specific \cite{sen12survey}.}}: (1) {\em flat-rate pricing}, that is, to charge a one-off payment for every user, regardless her usage. (2) {\em usage-based pricing}, that is, to charge according to the amount and pattern of usage of each individual user. In this broad classification, we also include congestion pricing as a form of usage-based pricing. In the presence of multiple service classes, one can apply flat-rate or usage-based pricing to each of the service classes.
Congestion pricing, levied only when resource demand exceeds supply, can be argued as economically and theoretically the most optimal strategy for allocating congested resources. In practice, however users strongly prefer flat-rate pricing \cite{odlyzko12flat}\footnote{
For example, while the Internet service provider (ISP) settlements on aggregate transited traffic are typically based on usage-based pricing, ISPs charge users using flat-rate pricing for its simplicity. This observation also applies to other digital services. While flat-rate pricing is simple, it is insufficient to control the desirable performance to satisfy users' utility. Users are intolerant of inferior performance are forced to opt out. 
On the other hand, usage-based pricing enables more sophisticated control of performance, but incurs a higher implementation cost because of its more intrusive monitoring and policing on the usage pattern. }.

Since PMP relies on the spontaneous economics adjustment of users, the outcome may not align with the goals of mechanism designers. There have been studies \cite{gibbens00JSAC,steinberg00duo,jain01pmp,sakurai03optout,rt04pmp,SSOA07pmp} in the literature to address the viability of PMP that consider the following two major criteria: (1) {\em Social welfare}, the total utility of users, taking into account the congestion. (2) {\em Provider profit}, the total payment collected by the service providers from different service classes. 

Based on a multiproduct economics model in \cite{chander89mono,palma89duopoly}, \cite{gibbens00JSAC} formulates a model of PMP, considering the massive number of infinitesimal users and analyzes the viability of PMP for social welfare and provider profit. However, \cite{gibbens00JSAC} consider only a specific model (that we call utilization-sensitive service) and assume no user will opt out\footnote{Allowing user opt-out leads to a variable number of users in the system that is equivalent to elastic demand, whereas inelastic demand is equivalent to a constant fixed number of users in the system.}.  Next, \cite{steinberg00duo} consider a less specified model but focusing on provider profit and still under the assumption of no user opt-out and homogeneous service classes (where the resources allocated to each class are identical). 

On the other hand, \cite{jain01pmp,sakurai03optout} consider utilization-sensitive service with user opt-out, but only study provider profit. Meanwhile, \cite{rt04pmp} numerically studies PMP under a different model (which we call latency-sensitive service) for only provider profit. 
In the prior work of utilization-sensitive service (\cite{gibbens00JSAC,jain01pmp,sakurai03optout}), it is reported that PMP is viable for a single monopoly provider. Nonetheless, \cite{rt04pmp} reports that PMP is not viable in latency-sensitive service for a monopoly provider based on only numerical analysis. We observe that different models can give contradictory results. But there are yet any studies to provide a complete picture considering a general model, beyond the specific examples of utilization- and latency-sensitive services. 

To provide a unifying picture, we first analytically study the case of single monopoly provider. We then extend to a duopoly setting for different models. In contrast to the observation in \cite{gibbens00JSAC}, we find that PMP is also viable for duopoly for a certain model of congestion with user opt-out. Our analysis is supported by extensive numerical study.
Our study follows the popular model of {\em infinitesimal users}, as in the prior work (\cite{gibbens00JSAC,steinberg00duo,jain01pmp,sakurai03optout,rt04pmp}). We note that a different model of PMP with finitely many users has been studied in \cite{SSOA07pmp} showing that in a specific setting of user utility function wherein a single service class is strictly better than multiple service classes for a monopoly provider. Their conclusion agrees with \cite{rt04pmp} and generally with ours, albeit by a different model with finitely many users. 


\section{Model and Notations} \label{sec:model}

This section presents a model for PMP based on different possible forms of negative externality, generalizing the models from \cite{gibbens00JSAC,steinberg00duo,jain01pmp,sakurai03optout,rt04pmp}, and defines the notations of \emph{equilibrium} (in which users settle their selections of service classes) and \emph{social welfare} and \emph{provider profit}. 

Regardless of the technologies of resource isolation in digital services, it is vital to understand the economical viability of such a service model. Without specifying the implementation details of resource isolation, we assume that resources can be conveniently split or merged among different service classes. 

\subsection{Utility and Services Classes}

Suppose that there are $m$ service classes. Similar to many economic studies \cite{gibbens00JSAC,jain01pmp,sakurai03optout}, we assume there are a large number of users. This can be approximated by a continuum model of {\em infinitesimal users} such that the type of each infinitesimal user is characterized by a one-dimensional valuation of a positive real parameter $\theta$. The consideration of infinitesimal user is widely used in economics literature that concerns massive numbers of users, such as in digital services. One expects that the more detailed model of finitely many users will approach the simpler one of infinitesimal users when the number of users becomes large.

When a user of type $\theta$ (in short, we call user $\theta$) uses service class $i \in \{1, ..., m\}$, we assume that its utility is given by: 
\begin{equation} \label{eqn:usr_util}
U_\theta(i) \triangleq V - p_i - \theta \cdot K(Q_i, C_i) 
\end{equation}
These parameters are explained as follows.
\begin{enumerate}

\item $V$ is the maximum utility of accessing the service.

\item $p_i$ is the one-off payment charged per user when accessing the $i$-th service class. Without loss of generality, we assume $V \ge p_1 \ge p_2 \ge \dots \ge p_m \ge 0$.

\item $C_i$ is the proportion of total capacity of the $i$-th service class, such that $\sum_{i=1}^{m} C_i = 1$.

\item $Q_i$ is the volume of users of accessing the $i$-th service class.

\item $K$ is a congestion function that is increasing in $Q_i$, but is decreasing in $C_i$, to be explained in the next section.

\end{enumerate}

Therefore, user $\theta$ will have two options either: (1) to select the $i$-th service class to join that gives the highest utility as  
\begin{equation} \label{eqn:join_iclass} 
i = {\arg \max}_{j \in \{1,...,m \}} U_\theta(j),  
\end{equation}
or (2) to opt out of all service classes because joining any service class will result in a negative utility, that is, $U_\theta(i) < 0$ for all $i \in \{1,...,m\}$.

An immediate implication of utility function $U_\theta(i)$ (Eqn.~(\ref{eqn:usr_util})) is that, the larger the value of $\theta$ means the higher the valuation on the negative externality, as compared with the price of each service class. Hence, users with the larger value of $\theta$ will be more likely to opt out of the service because of negative utility $U_\theta(i) < 0$ for all $i$.

{\bf Remark:} 
Utility function can be alternately defined as $\tilde{U}_\theta(i) \triangleq \theta \cdot \tilde{K}(Q_i, C_i) - p_i$, where $\tilde{K}(Q_i, C_i)$ represents a satisfaction function capturing the inverse effect of a congestion function\footnote{In this case, users with smaller value of $\theta$ will be likely to opt out of the service because of negative utility. Previously, $\tilde{U}_\theta(i)$ was used in the literature of economics \cite{chander89mono,palma89duopoly}.}. We consider utility function as Eqn.~(\ref{eqn:usr_util}) in this article, because there are special cases used in the prior work in networking research community \cite{gibbens00JSAC,steinberg00duo,jain01pmp,sakurai03optout,rt04pmp}. Nonetheless, the results derived in this article can be easily adapted and similarly applied to $\tilde{U}_\theta(i)$. Also, we remark that \cite{chander89mono,palma89duopoly} consider only homogeneous service classes (i.e., $C_i$ is the same in all $m$ service classes). We relax this constraint to include heterogeneous service classes of different $C_i$'s.

\subsection{Congestion Functions of Digital Service Models} \label{sec:example}

Under PMP, users first observe the performance of different service classes and then opt for the appropriate classes. The performance of digital services is characterized by {\em negative externality}, where, the greater number of users accessing a certain service, the less favorable performance the users can perceive. The degree of negative externality also depends on the resource allocated to the service. Hence, negative externality can be characterized by: (1) the volume of users accessing the service, and (2) the amount of allocated resource. These two quantities form the basis of negative externality.

In this article, we employ a congestion function $K(Q,C)$ to capture the negative externality, where $C$ represents the numerical amount of resource and where $Q$ represents the volume of users accessing the service. We normalize $Q$, such that $0 \le Q \le C$.
Table~\ref{tab:congf} provides several examples of $K(Q,C)$ that can model various metrics of negative externality in common digital services.

\begin{table}\centering
\caption{Examples of Metrics of Negative Externality for Digital Services\label{tab:congf}}{
\begin{tabular}{c@{ \quad }c@{ \quad }c@{ \quad }c}
\hline\hline 
& & & \\
Utilization & Latency & Loss Probability & Outage Probability \\ 
& & & \\
\hline
& & & \\
$K_{\sf utl}(Q,C) \triangleq\frac{Q}{C}$ & $K_{\sf lat}(Q,C) \triangleq\frac{1}{C - Q}$ & $K_{\sf los}(Q,C) \triangleq(\frac{Q}{C})^k \frac{1- \frac{Q}{C}}{1- (\frac{Q}{C})^{k+1}}$ & $K_{\sf out}(Q,C) \triangleq (\frac{\epsilon Q}{C})^C$ \\
& & & \\
\hline\hline
\end{tabular}}
\end{table} 

\begin{enumerate}

\item {\em Utilization}: This is measured by the portion of capacity per each unit of load, given by $K_{\sf utl}(Q,C) \triangleq \frac{Q}{C}$. Utilization is a useful metric for computation, bandwidth, and memory sharing services, such as cloud computing and network access services. $K_{\sf utl}(Q,C)$ was considered in prior work \cite{gibbens00JSAC}.

\item {\em Latency}: This is conveniently captured by a simple M/M/1 queue. Assuming the arrival rate is of $Q$ units, and the service rate is of $C$ units, then the total expected waiting time (i.e., queuing time plus service time) is $K_{\sf lat}(Q,C) \triangleq \frac{1}{C-Q}$. $K_{\sf lat}(Q,C)$ was considered in prior work \cite{rt04pmp}. Latency is a useful metric for queuing sensitive services such as streaming or network routing services. We also consider more general M/G/1 queue in the later section.

\item {\em Loss Probability}: We also consider M/M/1/$k$ queue to model queuing-based digital services. An important metric is whether a request will be dropped when all $k$ servers are occupied. In the M/M/1/$k$ model, the probability that $k$ servers are occupied is given by $K_{\sf los}(Q,C) \triangleq (\frac{Q}{C})^k \frac{1- \frac{Q}{C}}{1- (\frac{Q}{C})^{k+1}}$.

\item {\em Outage Probability}: To model the reliability of digital services, we consider a small probability that a server will fail. A natural setting of failure probability is proportional to the utilization $\frac{\epsilon Q}{C}$. If there are $C$ servers, then the probability that all servers will fail is given by $K_{\sf out}(Q,C) \triangleq (\frac{\epsilon Q}{C})^C$.

\end{enumerate}

{\bf Observations:}
Different congestion functions lead to different results. As a clear illustration, we consider a simple scenario of resource partitioning into two identical service classes charged at an identical price such that each class is allocated with half the original resource. Because of identical resource and price, it is likely the usage will be split equally between the two identical service classes. Hence, we define the resource and usage for each service class by $C_1 = C_2 = \frac{C}{2}$ and $Q_1 = Q_2 = \frac{Q}{2}$, where the subscript indicates the first or second class, and $C, Q$ are the resource and usage of the original nonsplit service class.  
For utilization-sensitive service, the congestion functions after service partitioning become 
\begin{equation} 
K_{\sf utl}(Q_1,C_1) = K_{\sf utl}(Q_2,C_2) = \frac{Q}{C} = K_{\sf utl}(Q,C)  
\end{equation} 
In other words, the users will not perceive any difference in terms of negative externality. 
On the other hand, for latency-sensitive service, the congestion functions after service partitioning become 
\begin{equation} 
K_{\sf lat}(Q_1,C_1) = K_{\sf lat}(Q_2,C_2) = \frac{2}{C-Q} > K_{\sf lat}(Q,C)  
\end{equation}
In this case, the users actually perceive a degradation of service after service partitioning!
Therefore, we may conclude that the provision of multiple service classes provided by resource partitioning is not viable for latency-sensitive service, because of the decrease of total welfare of users and also the decrease of provider profit, as some users may opt out of the service. 
The preceding simple discussion, however, has not taken into consideration that different service classes may have varying prices. A more complete study requires a model including payment as a part of the user utility and a clear description of the users' decision process for switching between service classes and opting out. 
This will be completed in the following sections.

\subsection{Equilibrium, Social Welfare and Provider Profit}

In this article, we will analyze the viability of PMP at equilibrium. An {\em equilibrium} will be attained when no user switches from his selection. To formulate equilibrium, we first note that $K(Q_i, C_i)$ is fixed for each $i$-th service class at equilibrium, thus also the price $p_i$. Hence, the utility function $U_\theta(i)$ (Eqn.~(\ref{eqn:usr_util})) becomes a linear function of $\theta$ with $-K(Q_i, C_i)$ as the slope and $V-p_i$ as the $y$-intercept. For the case $m=2$, we plot the utility $U_\theta(i)$ against $\theta$ in Fig.~\ref{fig:usr_equ} for illustration.

\begin{figure}[htb] 
\centering 
  \includegraphics[scale=0.5]{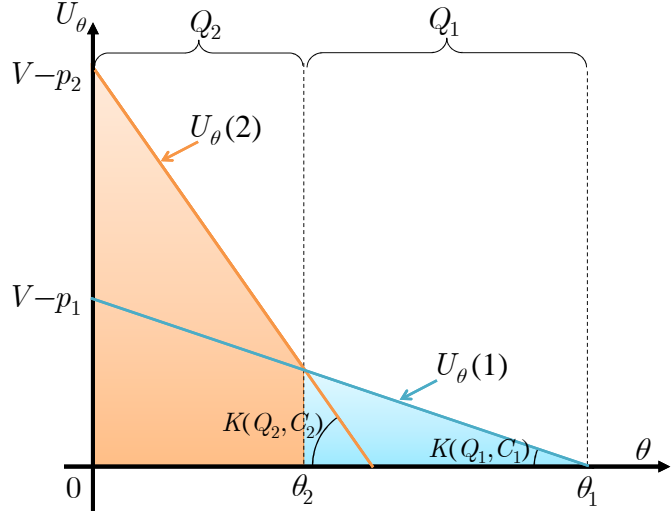}
  \caption{An illustration of equilibrium. We plot the utility $U_\theta(i)$ against $\theta$ for the case $m=2$. We assume $F(\theta) = \theta$ is a uniform distribution.} 
  \label{fig:usr_equ} 
\end{figure}

Given a vector of differentiated prices, denoted by $\mbox{\boldmath $p$} = (p_i)_{i=1}^{m}$, for the $m$ service classes, there exists a set of {\em cut-off users}, denoted by $\mbox{\boldmath $\theta$} = (\theta_i)_{i=1}^{m}$, such that, for $i = 2,..,m$, each cut-off user $\theta_i$ is indifferent to joining the $(i-1)$-th service class or the $i$-th service class (i.e., $U_{\theta_i}(i-1) = U_{\theta_i}(i)$), and cut-off user $\theta_1$ is indifferent to joining the first service class or opting out of these $m$ service classes. Hence, an equilibrium can be characterized by the tuple $(\mbox{\boldmath $p$}, \mbox{\boldmath $\theta$})$. 

Let $Q_0 \triangleq 1 - \sum_{i=1}^{m}(Q_i)$ be the volume of users
who opt out of these $m$ service classes. The users are parameterized by $\theta$, which is described by a cumulative distribution function $F(\theta)$ (and its probability density function denoted by $f(\theta)$). A special
case is that $F(\theta)$ is a uniform distribution.

\begin{definition}({\em Equilibrium}) \label{def:equ}
The tuple $(\mbox{\boldmath $p$}, \mbox{\boldmath $\theta$})$ defines an equilibrium, if the following constraints are satisfied:

\begin{enumerate}
\item[(${\sf c.1}$):] $\theta_1  > \theta_2  >  \dots  > \theta_m  > \theta_{m+1} = 0$,

\item[(${\sf c.2}$):] $ K(Q_i, C_i) \le K(Q_{i+1}, C_{i+1})$ for $i \ne m$, where 
\begin{equation} \label{eqn:c2}
Q_i \triangleq \left\{
\begin{array}{rl}
F(\theta_m) & \mbox{if } i = m \\
F(\theta_{i}) - F(\theta_{i+1}) & \mbox{if } 1 \le i < m
\end{array}
\right. 
\end{equation}

\item[(${\sf c.3}$):] \quad 
\begin{equation} \label{eqn:c3} \hspace{-25pt} 
\begin{array}{@{}r@{}l@{}l} 
p_{i-1} - p_{i} & = \theta_{i} \cdot \big( K(Q_{i}, C_{i}) - K(Q_{i-1}, C_{i-1})\big) & \mbox{\ if\ } 1 < i \le m \\
p_1 & = V - \theta_1 \cdot K(Q_1, C_1) & \mbox{\ otherwise\ } 
\end{array} 
\end{equation}

\end{enumerate}

\end{definition}

In order words, (${\sf c.1}$) requires the set of cut-off users $(\theta_i)_{i=1}^{m+1}$ to have a strict order of valuations on negative externality. User $\theta_{m+1} = 0$ is the least-valuation user who always accepts the lowest-priced service class (i.e., the $m$-th service class). (${\sf c.2}$) follows from $V \ge p_1 \ge \dots \ge p_m \ge 0$ and the definition of utility function (Eqn.~(\ref{eqn:usr_util})). Lastly, (${\sf c.3}$) characterizes the prices of service classes based on the fact that each cut-off user $\theta_i$ is indifferent to joining the $(i-1)$ and $i$-th service classes (i.e., $U_{\theta_i}(i-1) = U_{\theta_i}(i)$). These conditions are depicted in Fig.~\ref{fig:usr_equ}.

Note that it is possible that $p_i = p_{i-1}$ (i.e., $K(Q_i, C_i) = K(Q_{i-1}, C_{i-1})$) for some $i$. It is easy to see that an equilibrium always exists given either \mbox{\boldmath $p$} or \mbox{\boldmath $\theta$}. Namely, there is a one-to-one mapping between \mbox{\boldmath $p$} and \mbox{\boldmath $\theta$} at equilibrium (see \cite{chander89mono} for a rigorous proof). 

To facilitate the analysis, we also assume that $F(\theta)$ is a well-formed distribution, such that $f(\theta) > 0$ for $\theta \in [0, \bar{\theta}] \subseteq [0,1]$ for constant $\bar{\theta}$, ands $f(\theta) = 0$ otherwise. This assumption prevents discontinuity of the marginal change of \mbox{\boldmath $p$} with respect to the marginal change of \mbox{\boldmath $\theta$} at equilibrium.

Finally, we define the social welfare as the total utility of all users excluding the payment and define the monopoly provider profit as the total payment collected from the users.
Suppose $(\mbox{\boldmath $p$}, \mbox{\boldmath $\theta$})$ is an equilibrium. Let the {\em social welfare} be 
\begin{equation} \label{eqn:soc_wel}
  S(\mbox{\boldmath $p$}) \triangleq \sum_{i=1}^{m} \int_{\theta_{i+1}}^{\theta_{i}}
\Big( V-  \theta  \cdot K(Q_i, C_i) \Big) \cdot f(\theta ){\sf d}\theta 
\end{equation}
Note that in Eqn.~(\ref{eqn:soc_wel}), $\mbox{\boldmath $\theta$}$ is a function of $\mbox{\boldmath $p$}$ at equilibrium. Let the monopoly {\em provider profit} be  
\begin{equation} \label{eqn:mono_prof}
\pi(\mbox{\boldmath $p$}) \triangleq \sum_{i=1}^{m}  p_i  \cdot Q_i 
\end{equation}

\subsection{Numerical Study} \label{sec:numerical}

Before presenting our analytical results, this section provides some illustrations of the consequences of PMP in terms of social welfare and provider profit at equilibrium by means of numerical studies. We consider four specific examples, namely utilization-, latency-, loss- and outage-sensitive services. Specifically, we compare: (1) the case of one single service class ($C = 1$), and (2) the case of two service classes ($C_1 = 0.3, C_2 = 0.7$). In the numerical example, we let the probability distribution $F(\theta)$ be a uniform distribution and the maximum utility be $V = 2$.

We compare the maximum social welfare and provider profit that can be achieved between one single service class and two service classes. For two service classes, we let $p_2 = a \cdot p_1$, where $a \in [0, 1]$. When $a = 1$, it is equivalent to identical pricing at two service classes. We numerically evaluated the following four quantities for different service models. 
\begin{equation}
\max_{0 \le p \le V} \pi(p), \qquad  
\max_{0 \le p_1 \le V} \pi(p_1, a \cdot p_1), \qquad  
\max_{0 \le p \le V} S(p), \qquad  
\max_{0 \le p_1 \le V} S(p_1, a \cdot p_1)  
\end{equation}

Figs.~\ref{fig:cs_welfare}-\ref{fig:lb_profit} show the respective numerical values of maximum social welfare and provider profit at given values of $a$, for utilization-, latency-, loss- and outage-sensitive services. Note that, when there is one service class, the maximum social welfare and provider profit are constant values with respect to $a$.

\begin{figure*}[htb!] 
\centering 
  \begin{minipage}[c]{.47\textwidth}
    \includegraphics[scale=0.65]{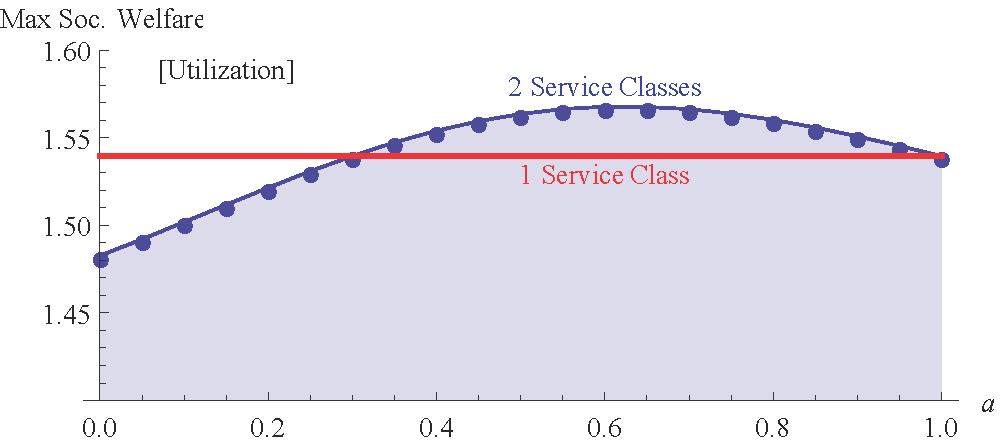}  
    \caption{Maximum social welfare for utilization-sensitive service, where $p_2 = a \cdot p_1$.} \label{fig:cs_welfare}
  \end{minipage}
  \hfill \quad
  \begin{minipage}[c]{.47\textwidth}
    \includegraphics[scale=0.65]{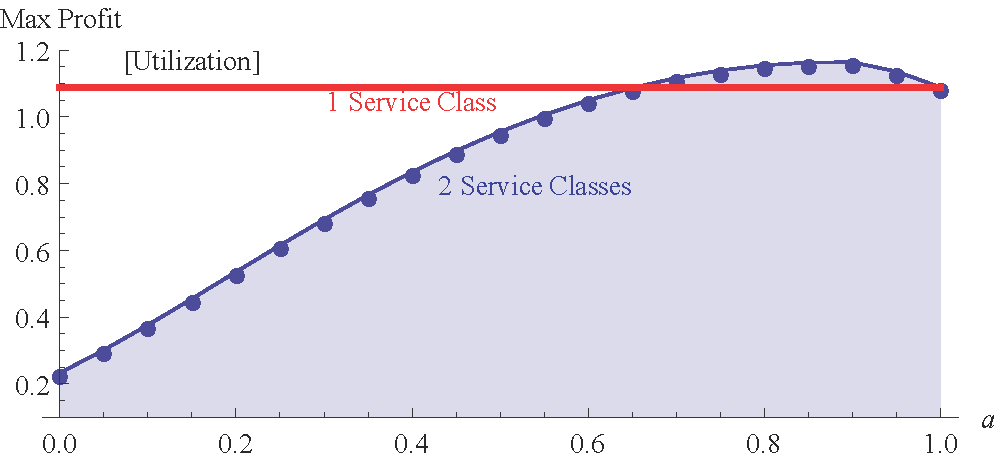} 
    \caption{Maximum provider profit for utilization-sensitive service, where $p_2 = a \cdot p_1$.} \label{fig:cs_profit}
  \end{minipage} 
  \begin{minipage}[c]{.47\textwidth}
    \includegraphics[scale=0.65]{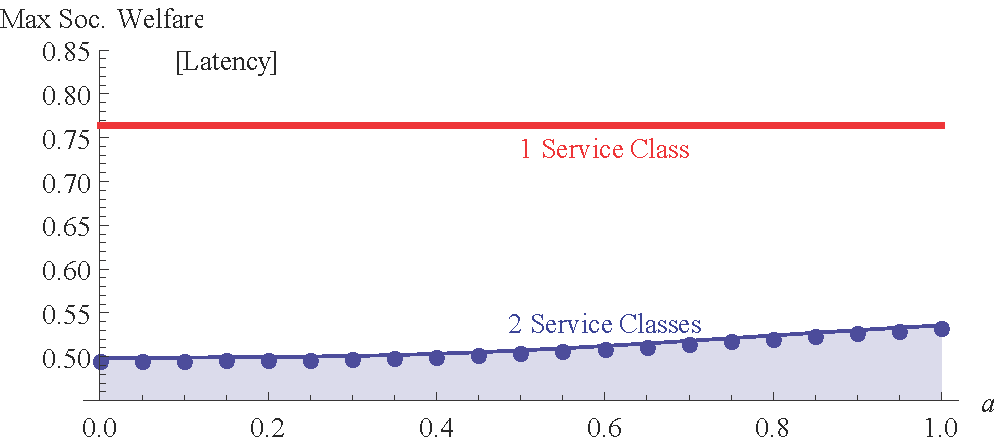}  
    \caption{Maximum social welfare for latency-sensitive service, where $p_2 = a \cdot p_1$.} \label{fig:lb_welfare}
  \end{minipage}
  \hfill \quad
  \begin{minipage}[c]{.47\textwidth}
    \includegraphics[scale=0.65]{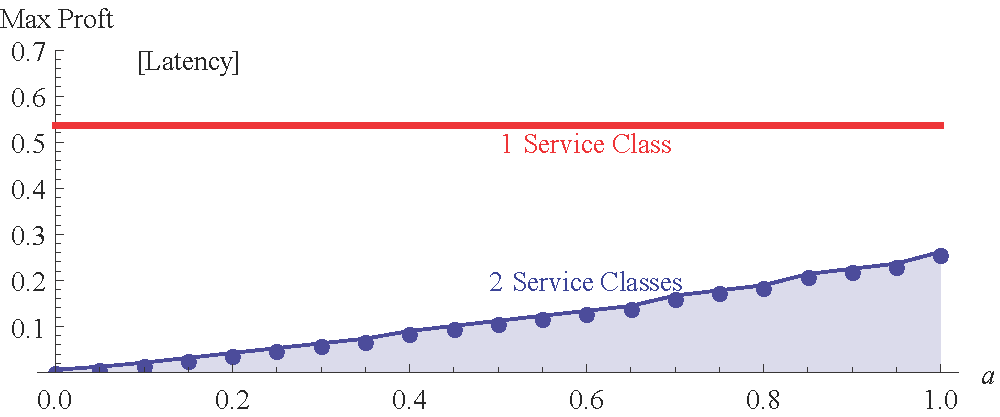}  
    \caption{Maximum provider profit for latency-sensitive service, where $p_2 = a \cdot p_1$.} \label{fig:lb_profit}
  \end{minipage} 
  \begin{minipage}[c]{.47\textwidth}
    \includegraphics[scale=0.65]{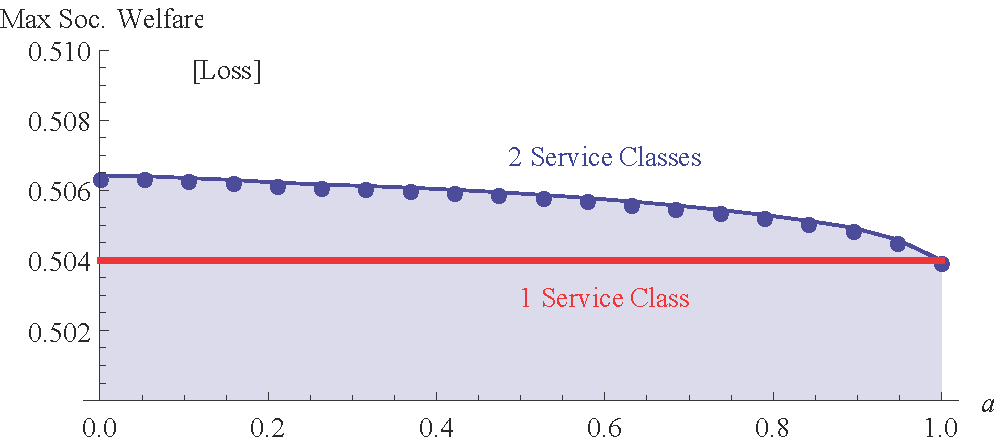}  
    \caption{Maximum social welfare  for loss-sensitive service, where $p_2 = a \cdot p_1$.} \label{fig:lo_welfare}
  \end{minipage}
  \hfill \quad
  \begin{minipage}[c]{.47\textwidth}
    \includegraphics[scale=0.65]{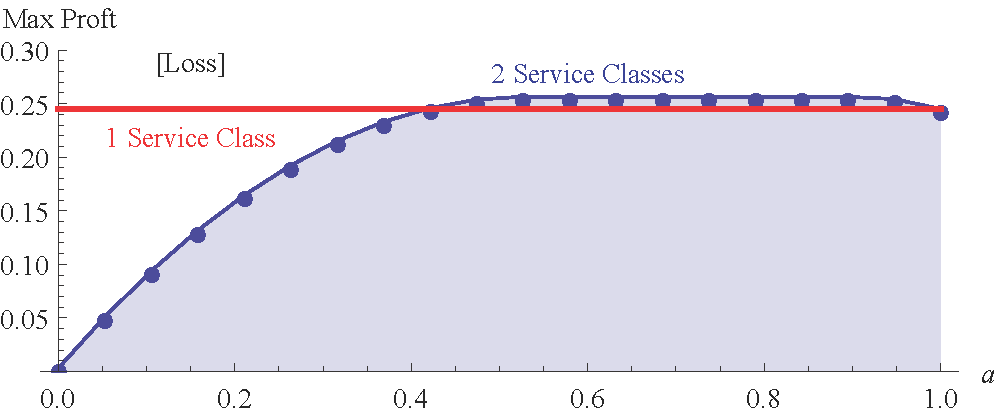}  
    \caption{Maximum provider profit for loss-sensitive service, where $p_2 = a \cdot p_1$.} \label{fig:lo_profit}
  \end{minipage} 
  \begin{minipage}[c]{.47\textwidth}
    \includegraphics[scale=0.65]{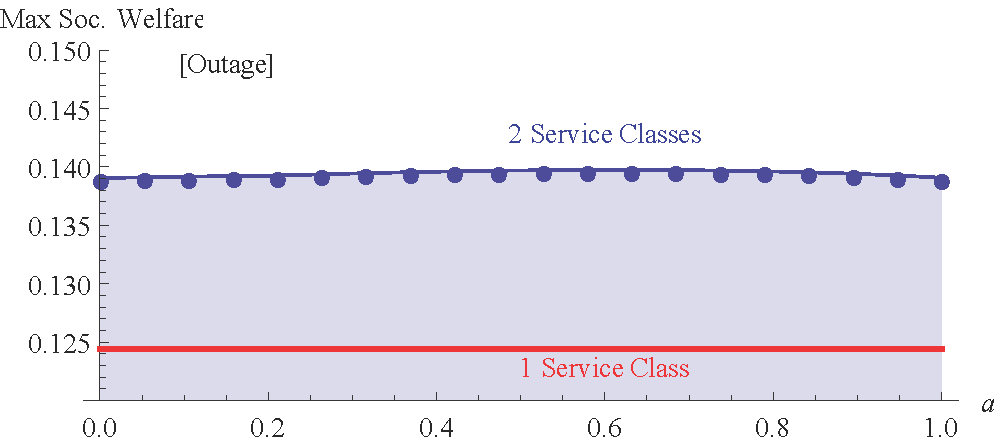}  
    \caption{Maximum social welfare for outage-sensitive service, where $p_2 = a \cdot p_1$.} \label{fig:ou_welfare}
  \end{minipage}
  \hfill \quad
  \begin{minipage}[c]{.47\textwidth}
    \includegraphics[scale=0.65]{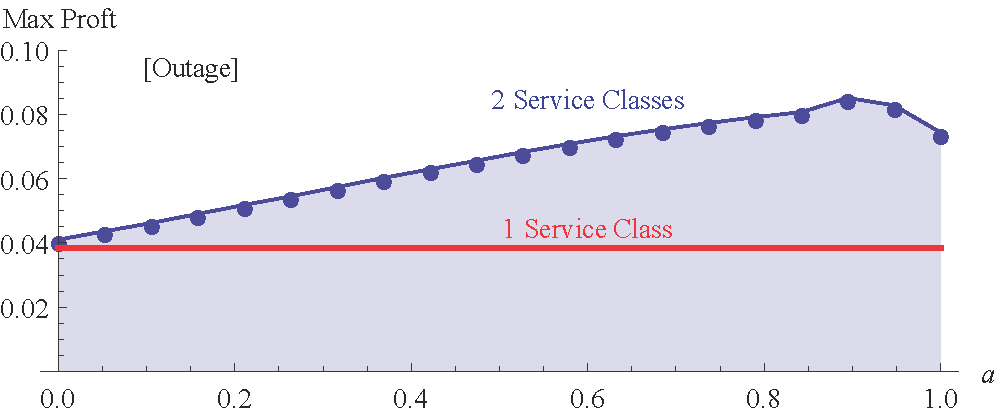}  
    \caption{Maximum provider profit for outage-sensitive service, where $p_2 = a \cdot p_1$.} \label{fig:ou_profit}
  \end{minipage} 
\end{figure*}

{\bf Observations:}
We obtain the following observations from Figs.~\ref{fig:cs_welfare}-\ref{fig:ou_profit} that motivate our later results.
\begin{enumerate}

\item For utilization- and loss-sensitive services, the social welfare and provider profit for one service class and two service classes under identical pricing (i.e., $a=1$) are equivalent. But for latency- and outage-sensitive service, there is a gap between the cases, that is, one service class yields better social welfare and profit for latency-sensitive service, whereas it is worse for outage-sensitive service.

\item For utilization-, loss- and outage-sensitive services, the service provider can make more profit from PMP under a certain setting of differentiated pricing (see Figs~\ref{fig:cs_profit}, \ref{fig:lo_profit}, \ref{fig:ou_profit}); but for latency-sensitive service, PMP (whether under differentiated pricing or identical pricing) is not viable from the provider profit's point of view (see Fig~\ref{fig:lb_profit}). These are exactly the contradictory conclusions reached by \cite{jain01pmp} and \cite{rt04pmp}, respectively. We observe the same conclusions for social welfare. 

\item We remark that, in general, for latency-sensitive service with two service classes, identical pricing may or may not provide higher social welfare or profit than differentiated pricing. However, for utilization-, loss- and outage-sensitive services with two service classes, it is always possible that differentiated pricing can provide higher social welfare and profit than identical pricing.

\end{enumerate}
The numerical results just given motivate the goal for this study, namely how to settle the question of when PMP can be guaranteed to yield higher profits and achieve more social welfare. We answer this question by deriving conditions on a general class of congestion functions, beyond just the cases of utilization- and latency-sensitive services. 

Tactically, we study this problem in two steps by asking the following questions.  
\begin{enumerate}

\item How do we ensure the viability of partitioning a service class into multiple service classes under identical pricing? 

\item How do we ensure the viability of differentiated pricing of multiple service classes as compared with identical pricing?

\end{enumerate}
Resolving these two questions can lead to a characterization of the viability of PMP. Our answers to these questions are elaborated in Sec.~\ref{sec:res}.

\section{Monopoly Case} \label{sec:res}

The viability of PMP ultimately depends on the basic properties of the congestion function. In this section, we provide a general analytical study by identifying some key properties of congestion functions associated with the viability of PMP considering monopoly. 

\subsection{Viability of Service Partitioning} \label{sec:partition}

We first consider identical pricing and provide insights for the viability of {\em service partitioning}, wherein one single service class is partitioned into two service classes, both of which are priced the same as the original service class and, each service class is allocated a portion of the capacity of the original service class. 
Initially, there is one single service class. Let $p$ and $C$ be the price and capacity of the original single service class. Let $\tilde{\theta}$ be the cut-off user at equilibrium (who is indifferent to either opting-out or joining the service) and the total usage be $\tilde{Q} \triangleq F(\tilde{\theta})$. Then, the social welfare becomes 
\begin{equation} \label{eqn:prof_2unpart}
S(p) \triangleq \int_{0}^{\tilde{\theta}} \Big( V-  \theta  \cdot
K(\tilde{Q}, C) \Big) \cdot f(\theta ){\sf d}\theta 
\end{equation}
and the respective provider profit becomes 
\begin{equation} \label{eqn:soc_2unpart}
\pi(p) \triangleq p \tilde{Q} 
\end{equation}

Next, we consider resource partitioning into two service classes. Let $C_1$ and $C_2$ be the respective capacity of each service class, where $C_1 + C_2 = C$. Let $\theta_1$ and $\theta_2$ be the cut-off users at the equilibrium of each service class, and the respective usage be $Q_1 \triangleq F(\theta_1) - F(\theta_2)$ and $Q_2 \triangleq F(\theta_2)$. Because of an identical pricing of each service class at $p$, the social welfare for the partitioned service classes is\footnote{In trying to reduce notations, we are going to slightly abuse our use of them here. The social welfare for the case of a single class service with one price $p$, and the case of multiple service classes with price $\mbox{\boldmath $p$}$ will both be denoted by a function $S(.)$, with a different number of price parameters as appropriate.}: 
\begin{equation} \label{eqn:soc_2part}
\begin{array}{rl}
S(p,p) \triangleq & \displaystyle \int_{0}^{\theta_2} \Big( V-  \theta  \cdot K(Q_2, C_2) \Big) \cdot f(\theta ){\sf d}\theta + \int_{\theta_2}^{\theta_1} \Big( V-  \theta  \cdot K(Q_1, C_1) \Big) \cdot f(\theta ){\sf d}\theta
\end{array} 
\end{equation}
while the respective provider profit is\footnote{Again, the same abuse of notation is applied to the profit function.}: 
\begin{equation} \label{eqn:prof_2part}
\pi(p,p) \triangleq p (Q_1 + Q_2) 
\end{equation}
We next provide a general sufficient condition on $K(Q,C)$ for the viability of service partitioning.

\begin{theorem} \label{thm:part}
For all $0 \le \alpha < 1$ we have the following.
\begin{enumerate}

\item ({\em Partition-preferred congestion function}):
If $K(Q, C) \ge K(\alpha Q, \alpha C)$, 
\begin{equation}
S(p,p) \ge S(p) \mbox{\ and\ } \pi(p,p) \ge \pi(p) 
\end{equation}

\item ({\em Multiplexing-preferred congestion function}):
If $K(Q, C) \le K(\alpha Q, \alpha C)$, 
\begin{equation}
S(p,p) \le S(p) \mbox{\ and\ } \pi(p,p) \le \pi(p) 
\end{equation}

\end{enumerate}
\end{theorem}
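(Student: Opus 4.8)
The plan is to first exploit the structure that identical pricing imposes at equilibrium, and then reduce both the welfare and the profit comparison to two scalar facts about the partitioned equilibrium. The key observation is that, because both classes are priced at $p$, condition (${\sf c.3}$) with $i=2$ forces $0 = \theta_2\big(K(Q_2,C_2) - K(Q_1,C_1)\big)$; since $\theta_2 > 0$ (otherwise the second class is empty and the partition is trivial), the two classes carry \emph{equal} congestion at equilibrium. Write $\kappa \triangleq K(Q_1,C_1) = K(Q_2,C_2)$ for this common value. Then the two pieces of $S(p,p)$ in Eqn.~(\ref{eqn:soc_2part}) share the same integrand, so $S(p,p) = \int_0^{\theta_1}\big(V - \theta\kappa\big) f(\theta)\,{\sf d}\theta$, and the total usage is $Q_1 + Q_2 = F(\theta_1)$, so $\pi(p,p) = p\,F(\theta_1)$. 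The marginal user satisfies $\theta_1 \kappa = V - p$, while in the single-class case $\tilde\theta\,K(\tilde Q, C) = V - p$. I would then show that everything follows from the two comparisons $\theta_1 \ge \tilde\theta$ and $\kappa \le K(\tilde Q, C)$ (and their reversals in Part~2).

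The core lemma I would establish is that the equilibrium congestion never exceeds the single-class congestion at the \emph{same total usage}: for every total load $Q$ split across capacities $C_1, C_2$ so as to equalize congestion, the common value $\kappa(Q)$ satisfies $\kappa(Q) \le K(Q,C)$ under the partition-preferred hypothesis. To see this, compare the equalizing split against the \emph{proportional} split $Q_i = \frac{C_i}{C} Q$, for which class $i$ sits at the point $(\alpha_i Q, \alpha_i C)$ with $\alpha_i = C_i/C \in (0,1)$; the hypothesis $K(Q,C) \ge K(\alpha_i Q, \alpha_i C)$ bounds each proportional congestion by $K(Q,C)$. Because $K$ is increasing in $Q$, moving load from the more congested class to the less congested one to reach equality drives the common value into the interval spanned by the two proportional congestions; in particular $\kappa(Q) \le \max_i K(\alpha_i Q, \alpha_i C) \le K(Q, C)$.

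With the lemma in hand, I would transfer the congestion inequality to the cut-offs by a monotone fixed-point argument. Both $G(\theta) \triangleq \theta\,K(F(\theta),C)$ and $H(\theta) \triangleq \theta\,\kappa(F(\theta))$ are increasing in $\theta$, and the lemma gives $H \le G$ pointwise; since $G(\tilde\theta) = V-p = H(\theta_1)$, monotonicity yields $H(\theta_1) = G(\tilde\theta) \ge H(\tilde\theta)$ and hence $\theta_1 \ge \tilde\theta$, i.e.\ $Q_1 + Q_2 = F(\theta_1) \ge F(\tilde\theta) = \tilde Q$, which is exactly $\pi(p,p) \ge \pi(p)$. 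Combining $\theta_1 \ge \tilde\theta$ with the equal-product relation $\theta_1\kappa = \tilde\theta\,K(\tilde Q,C)$ then gives $\kappa \le K(\tilde Q, C)$. For the welfare, writing $\kappa_0 \triangleq K(\tilde Q, C)$, I would split
\[
S(p,p) - S(p) = \int_0^{\tilde\theta}\theta\,(\kappa_0 - \kappa)\,f(\theta)\,{\sf d}\theta + \int_{\tilde\theta}^{\theta_1}\big(V - \theta\kappa\big)\,f(\theta)\,{\sf d}\theta,
\]
where the first integrand is nonnegative because $\kappa \le \kappa_0$, and the second is nonnegative because $\theta \le \theta_1$ forces $\theta\kappa \le \theta_1\kappa = V-p$, so $V - \theta\kappa \ge p \ge 0$; hence $S(p,p) \ge S(p)$. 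Part~2 is the mirror image: the reversed hypothesis gives $\kappa(Q) \ge K(Q,C)$, hence $\theta_1 \le \tilde\theta$ and $\kappa \ge \kappa_0$, and the same two-term decomposition flips sign.

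I expect the main obstacle to be the core lemma, specifically making rigorous that the equalizing reallocation keeps the common congestion within the range of the two proportional congestions; this needs the strict monotonicity of $K$ in $Q$ (so the per-class inverse load $q_i(\kappa)$ is well-defined and increasing) together with an intermediate-value argument showing the solution of $q_1(\kappa) + q_2(\kappa) = Q$ is sandwiched between the proportional levels. A secondary point to check carefully is the monotonicity of $\kappa(Q)$ in the total load $Q$, which the well-formedness of $F$ and the fixed-point comparison require; the degenerate cases ($\theta_2 = 0$, or $V = p$) should be dispatched separately as trivial.
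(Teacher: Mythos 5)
Your proposal is correct, and it reaches the crucial intermediate facts by a genuinely different route from the paper's. Both arguments reduce the theorem to the same two scalar comparisons --- $\theta_1 \ge \tilde\theta$ and $\kappa \le K(\tilde Q, C)$ in Part~1, reversed in Part~2 --- and your final step (the two-term decomposition of $S(p,p)-S(p)$, plus $\pi(p,p) = p\,F(\theta_1)$) is essentially the paper's Step~2, only spelled out more explicitly than the paper bothers to. The difference is in how $\theta_1 \ge \tilde\theta$ is obtained. The paper argues by contradiction directly on the two equilibria: assuming $\tilde\theta > \theta_1$, it applies the scaling hypothesis twice (once with $\alpha = C_2/C_1$ to get the load bound $Q_1 + Q_2 \ge Q_1/\beta$ where $\beta = C_1/C$, once with $\alpha = \beta$ to get $K(\tilde Q, C) \ge K(\beta\tilde Q, C_1)$) and derives $Q_1 > \beta\tilde Q > \beta(Q_1+Q_2)$, a contradiction. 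You instead isolate a structural lemma --- at any fixed total load $Q$, the equalized-congestion split has common value $\kappa(Q) \le K(Q,C)$ --- proved by comparison with the proportional split, and then transfer it to the cut-offs by monotonicity. Your lemma is arguably the conceptual heart of the matter, and the obstacle you anticipate is not actually an obstacle: no intermediate-value argument is needed, since if $\kappa > K(\alpha_i Q, \alpha_i C)$ for both $i$, monotonicity in $Q$ forces $Q_i > \alpha_i Q$ for both $i$, whence $Q_1 + Q_2 > Q$, a contradiction. The price of your route is bookkeeping the paper avoids: you invoke $\kappa(\cdot)$ at the load $F(\tilde\theta)$, so you need existence and monotonicity of equalizing splits away from the actual equilibrium (requiring continuity of $K$ in $Q$); this can be sidestepped by running the comparison through $G(\theta) \triangleq \theta\,K(F(\theta),C)$ instead of $H$, since the lemma applied only at the equilibrium load $F(\theta_1)$ gives $G(\theta_1) \ge \theta_1\kappa = V - p = G(\tilde\theta)$, and strict monotonicity of $G$ finishes. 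One caution on Part~2: the literal term-by-term sign flip of your decomposition can fail (e.g.\ when $p$ is near $0$, the oriented second term $\int_{\tilde\theta}^{\theta_1}(V-\theta\kappa)f(\theta)\,{\sf d}\theta$ can be positive even though $\theta_1 \le \tilde\theta$), so you should mirror the decomposition itself --- expand around $\kappa$ on $[0,\theta_1]$ and use $V - \theta\kappa_0 \ge p \ge 0$ on $[\theta_1,\tilde\theta]$ --- rather than reusing the Part~1 split with reversed inequalities.
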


The proofs are all in the Appendix unless otherwise stated.

Theorem~\ref{thm:part} classifies two types of congestion functions. Intuitively, the congestion function favoring service partitioning is one that sees decreased congestion externality as we scale down the usage and capacity. Alternatively, the congestion function that favors multiplexing (service merging) is one that sees decreased congestion externality as we scale up the usage and capacity. Note that a congestion function $K(Q, C)$ is given by definition increasing in usage $Q$, but decreasing in capacity $C$. When we scale up both usage and capacity, one or the other of these factors is more dominating, giving rise to the two classes of congestion functions. 

We next apply Theorem~\ref{thm:part} to several specific examples.

\begin{example} 

\begin{enumerate}

\item {\em Utilization-sensitive service:} $K_{\sf utl}(Q, C)$ is \emph{indifferent} to partitioning or multiplexing. 
\begin{equation} 
K_{\sf utl}(Q, C) = \frac{Q}{C} = \frac{\alpha Q}{\alpha C} = K_{\sf
cs}(\alpha Q, \alpha C) 
\end{equation}

\item {\em Latency-sensitive service:} $K_{\sf lat}(Q, C)$ prefers multiplexing to partitioning. 
\begin{equation}
K_{\sf lat}(Q, C) = \frac{1}{C - Q} < \frac{1}{\alpha (C - Q)} = K_{\sf lat}(\alpha Q, \alpha C)
\end{equation}
Further, motivated by Pollaczek-Khinchine formula for M/G/1 queue, we also consider general latency as 
\begin{equation}
K_{\sf glat}(Q, C) = \frac{Q(1+\delta_S^2)}{2C(C-Q)}+\frac{1}{C}, 
\end{equation}
where $\delta_S^2$ is the coefficient of variation of service time, by convention. Similarly, $K_{\sf glat}(Q, C)$ prefers multiplexing to partitioning. 
\begin{equation}
K_{\sf glat}(Q, C) < \frac{1}{\alpha} K_{\sf glat}(Q, C) = K_{\sf glat}(\alpha Q, \alpha C) 
\end{equation}

\item {\em Loss-sensitive service:} Similar to $K_{\sf utl}(Q, C)$,  $K_{\sf los}(Q, C)$ is \emph{indifferent} to partitioning or multiplexing.

\item {\em Outage-sensitive service:} $K_{\sf out}(Q, C)$ prefers partitioning to multiplexing: 
\begin{equation}
K_{\sf out}(Q, C) = (\frac{\epsilon \alpha Q}{\alpha C})^C  > (\frac{\epsilon \alpha Q}{\alpha C})^{\alpha C} = K_{\sf out}(\alpha Q, \alpha C) 
\end{equation}

\end{enumerate}
By means of Theorem~\ref{thm:part}, we conclude that service partitioning is viable for utilization-, loss- and outage-sensitive services. This has been validated in Sec.~\ref{sec:numerical}. Note that Theorem~\ref{thm:part} is sufficiently general and can be applied to diverse types of congestion functions. 
\end{example}
\

The following corollary extends Theorem~\ref{thm:part} from two classes to multiple classes.

\begin{corollary} \label{corpart}
Suppose $0 \le \alpha < 1$, and $\mbox{\boldmath $p$} = (p_i = p)_{i=1}^{m}$, then the following hold.
\begin{enumerate}

\item ({\em Partition-preferred congestion function}):

If $K(Q, C) \ge K(\alpha Q, \alpha C)$ for all $\alpha$, 
\begin{equation}
S(\mbox{\boldmath $p$}) \ge S(p) \mbox{\ and\ } \pi(\mbox{\boldmath $p$}) \ge \pi(p) 
\end{equation}

\item ({\em Multiplexing-preferred congestion function}):

If $K(Q, C) \le K(\alpha Q, \alpha C)$ for all $\alpha$, 
\begin{equation}
S(\mbox{\boldmath $p$}) \le S(p) \mbox{\ and\ } \pi(\mbox{\boldmath $p$}) \le \pi(p) 
\end{equation}
\end{enumerate}
\end{corollary}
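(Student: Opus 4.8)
The plan is to reduce the multi-class comparison to a one-dimensional comparison of equilibrium congestion levels, and then translate the scaling hypothesis on $K$ into a statement about the inverse of $K$ in its first argument. First I would record a structural fact about identical pricing: when $p_1 = \cdots = p_m = p$, condition (${\sf c.3}$) forces $\theta_i \cdot \big(K(Q_i,C_i) - K(Q_{i-1},C_{i-1})\big) = 0$ for $1 < i \le m$, and since $\theta_i > \theta_{m+1} = 0$ by (${\sf c.1}$), every class must carry the same congestion value $K(Q_i,C_i) = \kappa$. The remaining equation $p = V - \theta_1 \kappa$ then fixes the top cut-off $\theta_1 = (V-p)/\kappa$, and the total usage is $F(\theta_1)$. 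Consequently the welfare (Eqn.~(\ref{eqn:soc_wel})) collapses to $S(\mbox{\boldmath $p$}) = \int_0^{\theta_1}(V - \theta\kappa) f(\theta)\,{\sf d}\theta$ and the profit to $\pi(\mbox{\boldmath $p$}) = p\,F(\theta_1)$, each a function of the single scalar $\kappa$ alone.

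Next I would show that both quantities are non-increasing in $\kappa$. For the profit this is immediate, since $\theta_1 = (V-p)/\kappa$ decreases in $\kappa$ and $F$ is a CDF; for the welfare, differentiating in $\kappa$ produces a boundary term proportional to $(V - \theta_1\kappa) = p \ge 0$ times ${\sf d}\theta_1/{\sf d}\kappa < 0$, plus $-\int_0^{\theta_1}\theta f(\theta)\,{\sf d}\theta \le 0$, so ${\sf d}S/{\sf d}\kappa \le 0$. Thus the whole corollary reduces to comparing the equilibrium congestion $\kappa_m$ of the $m$-class system against the equilibrium congestion $\kappa_1$ of the single class: partitioning is weakly welfare- and profit-improving exactly when it lowers $\kappa$.

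To compare the two congestion levels I would characterize each equilibrium as the crossing of an increasing supply curve with a decreasing demand curve. Let $g_C(\kappa)$ denote the usage solving $K(Q,C) = \kappa$ (well defined because $K$ is increasing in $Q$). The single-class equilibrium solves $g_C(\kappa) = F((V-p)/\kappa)$, while the $m$-class equilibrium solves $\sum_{i=1}^m g_{C_i}(\kappa) = F((V-p)/\kappa)$; in both the right-hand side decreases and the left-hand side increases in $\kappa$, so a pointwise larger left-hand side forces a smaller equilibrium $\kappa$. The key lemma is therefore $\sum_{i=1}^m g_{C_i}(\kappa) \ge g_C(\kappa)$ in the partition-preferred case (and $\le$ in the multiplexing-preferred case). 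I would derive it from the scaling hypothesis: writing $C_i = \alpha_i C$ with $\sum_i \alpha_i = 1$ and each $\alpha_i \in (0,1)$, the inequality $K(Q,C) \ge K(\alpha_i Q, \alpha_i C)$ evaluated at $Q = g_C(\kappa)$ gives $K(\alpha_i g_C(\kappa), \alpha_i C) \le \kappa$, whence monotonicity of $K$ in its first argument yields $g_{C_i}(\kappa) = g_{\alpha_i C}(\kappa) \ge \alpha_i g_C(\kappa)$; summing over $i$ gives $\sum_i g_{C_i}(\kappa) \ge g_C(\kappa)$. The multiplexing case is identical with all inequalities reversed.

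Combining the three steps yields $\kappa_m \le \kappa_1$ (resp.\ $\ge$), hence $S(\mbox{\boldmath $p$}) \ge S(p)$ and $\pi(\mbox{\boldmath $p$}) \ge \pi(p)$ (resp.\ $\le$). I expect the middle reduction to be the main obstacle: one must justify that under identical pricing the $m$-class equilibrium really is pinned down by a single congestion value, and that this value moves monotonically with the aggregate supply curve $\sum_i g_{C_i}$. Once the scaling hypothesis is converted into the superadditivity (resp.\ subadditivity) of $g$ under capacity scaling, the ray-monotonicity of $K$ does all the work and the result holds for arbitrary $m$ and arbitrary capacities, with no induction on the number of classes. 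The only minor technicalities to dispatch are the well-definedness of $g_C$ (strict monotonicity / invertibility of $K$ in $Q$, or a generalized inverse) and the existence and uniqueness of the crossing point, both of which follow from the monotonicity of the two curves together with the well-formed-distribution assumption already in force.
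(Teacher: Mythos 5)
Your proof is correct, but it takes a genuinely different route from the paper. The paper proves only the two-class case (Theorem~\ref{thm:part}) in detail -- via a contradiction argument comparing the cut-off users $\theta_1$ and $\tilde{\theta}$ -- and then dispatches the corollary in one line, ``by induction'' on the number of classes. Your argument never invokes Theorem~\ref{thm:part} at all: you exploit the structural fact that identical pricing forces a common congestion level $\kappa$ across all classes (the same fact that appears as Eqn.~(\ref{eqn:pf1:K}) in the paper's proof of the theorem), reduce welfare and profit to monotone functions of $\kappa$, and then characterize the equilibrium $\kappa$ as the unique crossing of the increasing aggregate supply curve $\sum_i g_{C_i}(\kappa)$ with the decreasing demand curve $F((V-p)/\kappa)$, where $g_C$ is the inverse of $K(\cdot,C)$. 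The scaling hypothesis then becomes superadditivity (resp.\ subadditivity) of $g$ under capacity splitting, via $g_{\alpha_i C}(\kappa) \ge \alpha_i\, g_C(\kappa)$ summed over $\sum_i \alpha_i = 1$. What your approach buys: it handles arbitrary $m$ and arbitrary capacity splits in a single step, with no induction, and it sidesteps a subtlety the paper glosses over -- the induction step requires partitioning one class \emph{inside} an existing multi-class equilibrium, which is not literally the isolated two-class setting of Theorem~\ref{thm:part}, and justifying that reduction needs essentially the aggregate-supply observation you make explicit. What the paper's approach buys: given that Theorem~\ref{thm:part} is already proven, the corollary is free (modulo that subtlety), whereas your argument re-proves the two-class case from scratch. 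The technical caveats you flag (strict monotonicity of $K$ in $Q$ for invertibility of $g_C$, and existence/uniqueness of the crossing point) are real but benign, and are covered by the paper's standing assumptions that $K$ is increasing in $Q$ and that $F$ is well formed.
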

\begin{proof}
This is straightforward to prove by induction through proving $m$ service classes is true when supposing $(m-1)$ service classes is true. 
\end{proof}

\subsection{Viability of Differentiated Pricing} \label{sec:different}

Although Sec.~\ref{sec:partition} provides the sufficient condition for service partitioning under identical pricing, it does not cover the case of differentiated pricing. In particular, PMP is not perceivable by users under identical pricing. To complete the picture, in this section we compare differentiated pricing and identical pricing. We will rely on the following property of congestion functions.

\begin{definition} \label{def:mono} ({\em Monotone Preference to Service Classes}) Given a fixed set $\{C_i: i = 1,...,m \}$, the set of congestion functions $\{K(Q_i, C_i): i = 1,...,m \}$ are subject to:
\begin{enumerate}

\item each $K(Q_i, C_i)$ must be strictly increasing and differentiable in $Q_i$, hence, the partial derivative of $K(Q_i, C_i)$ at $Q_i$: $k(Q_i, C_i) \triangleq \frac{\partial K(Q, C)}{\partial Q}|_{Q=Q_i, C=C_i}$  exists and is positive;

\item suppose $C_1 < C_2 < \dots < C_m$, then either one of the following two cases must be true:
\begin{enumerate}

\item[(${\sf m.1}$)] $Q_{i} > Q_{j}$ implying $k(Q_{i},C_{i}) > k(Q_{j},C_{j})$ for any distinct pair $i \ne j$; or

\item[(${\sf m.2}$)] $Q_{i} > Q_{j}$ implying $k(Q_{i},C_{i}) < k(Q_{j},C_{j})$ for any distinct pair $i \ne j$.

\end{enumerate}

\end{enumerate}

\end{definition}

The first condition in Definition~\ref{def:mono} ensures the smoothness of congestion functions, whereas the second condition requires a monotone order on the derivatives of the $m$ service classes. 
Intuitively, a monotone order of the derivatives (i.e., (${\sf m.1}$) or (${\sf m.2}$)) reflects a monotone order of sensitivity of negative externality among the service classes. 
Consequently, such a monotone preference is an indication that, once a user joins the second class, any marginal change to usage in each service class will only shift the user's selection to either the first class or third class (if available). This precludes the switching among service classes that is out of a linear order.

Note that homogeneous service classes with the same convex congestion function (i.e., $C_i = C_j$) obviously satisfy Definition~\ref{def:mono}. Here, we also allow distinct $C_i$ to capture the heterogeneous capacities among different service classes. The provider profit for homogeneous service classes had been studied in \cite{chander89mono}. Here we generalize the result to consider heterogeneous service classes and social welfare. We next apply Definition~\ref{def:mono} to some examples.

\begin{example} 

\begin{enumerate}

\item {\em Utilization-sensitive service:} Suppose $C_1 < C_2 < \dots < C_m$, $K_{\sf utl}(Q, C)$ satisfies monotone preference to service classes. Then 
\begin{eqnarray}
& & k_{\sf utl}(Q_i, C_i) = \frac{\partial K_{\sf utl}(Q, C)}{\partial Q}|_{Q=Q_i, C=C_i} = \frac{1}{C_i} \\
& \Rightarrow & k_{\sf utl}(Q_1, C_1) > k_{\sf utl}(Q_2, C_2) >
\dots > k_{\sf utl}(Q_m, C_m) \mbox{\ for all\ } Q_{1}, Q_{2}, ..., Q_{m}. 
\end{eqnarray} 

\item {\em Latency-sensitive service:} However, $K_{\sf lat}(Q, C)$ does {\em not} always satisfy monotone preference to service classes because 
\begin{equation}
k_{\sf lat}(Q_i, C_i) = \frac{\partial K_{\sf lat}(Q, C)}{\partial Q}|_{Q=Q_i, C=C_i} = \frac{1}{(C_i - Q_i)^2} 
\end{equation}
Note that $\frac{1}{(C_i - Q_i)^2}$ is not a monotone function in $C_i$ or $Q_i$. For instance, when $C_1 = 0.3, C_2 = 0.7, Q_1 = 0.2, Q_2 = 0.5$, then 
$\frac{1}{(C_1 - Q_1)^2} > \frac{1}{(C_2 - Q_2)^2}$
However, when $Q_1 = 0.05, Q_2 = 0.5$, then
$\frac{1}{(C_1 - Q_1)^2} < \frac{1}{(C_2 - Q_2)^2}$.

\item {\em Loss-sensitive service:} Similar to $K_{\sf utl}(Q, C)$,  $K_{\sf los}(Q, C)$ satisfies monotone preference to service classes.

\item {\em Outage-sensitive service:} One also can show that $K_{\sf out}(Q, C)$ satisfies monotone preference to service classes.

\end{enumerate}
 
\end{example}
\

In the following, we compare the maximum social welfare gained by identical pricing and differentiated pricing, given a monotone preference to service classes. Hence, we conclude that differentiated pricing is viable for utilization-, loss- and outage-sensitive services. This has been validated in Sec.~\ref{sec:numerical}. 

\begin{theorem} \label{thm:soc}
Given two service classes that satisfy monotone preference (Definition~\ref{def:mono}), the social welfare obtained through identical pricing at $p$ is strictly inferior to differentiated pricing for some $p_1 \ne p_2$: 
\begin{equation}
S(p_1,p_2) > S(p,p) 
\end{equation}
\end{theorem}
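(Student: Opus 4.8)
The plan is to treat the social welfare as a functional of the \emph{allocation} alone. Since $S$ (Eqn.~(\ref{eqn:soc_2part})) excludes the payments, it depends on the prices only through the cut-off users $(\theta_1,\theta_2)$ that those prices induce at equilibrium; by the one-to-one correspondence between prices and cut-offs I may therefore vary $(\theta_1,\theta_2)$ directly. First I would translate identical pricing into these coordinates: substituting $p_1=p_2=p$ into (${\sf c.3}$) gives $0=\theta_2\big(K(Q_2,C_2)-K(Q_1,C_1)\big)$, and since $\theta_2>0$ this forces $K(Q_1,C_1)=K(Q_2,C_2)$. Because $C_1<C_2$ and $K$ is increasing in $Q$ but decreasing in $C$, equality of the two congestion values forces $Q_1<Q_2$. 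Monotone preference (Definition~\ref{def:mono}) then yields $k(Q_1,C_1)\ne k(Q_2,C_2)$, with a definite order fixed by whether (${\sf m.1}$) or (${\sf m.2}$) holds. This non-degeneracy of the marginal sensitivities at the identical-pricing equilibrium is the engine of the argument.

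\textbf{The variational computation.} Next I would perturb into the differentiated regime and compute the first-order change of $S$. The natural one-parameter family shifts the interior cut-off $\theta_2$, reassigning a marginal amount of users between the two classes. Differentiating (\ref{eqn:soc_2part}) with $Q_1=F(\theta_1)-F(\theta_2)$ and $Q_2=F(\theta_2)$, the boundary terms collapse to $f(\theta_2)\,\theta_2\big(K(Q_1,C_1)-K(Q_2,C_2)\big)$, which \emph{vanishes} at identical pricing, while the interior terms leave a residual of the form $f(\theta_2)\big(k(Q_1,C_1)\,A_1-k(Q_2,C_2)\,A_2\big)$, where $A_1=\int_{\theta_2}^{\theta_1}\theta f\,d\theta$ and $A_2=\int_{0}^{\theta_2}\theta f\,d\theta$ are the $\theta$-weighted masses of the two classes. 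Thus the welfare gradient at the identical-pricing point is governed by the marginal sensitivities $k(Q_i,C_i)$, each weighted by the $\theta$-mass its class carries.

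\textbf{Matching the sign to feasibility and concluding.} I would then align the sign of this derivative with the feasible side of constraint (${\sf c.2}$). Increasing $\theta_2$ moves mass from the small-capacity class $1$ to the large-capacity class $2$; this makes $K(Q_1,C_1)<K(Q_2,C_2)$ strictly, i.e. it lands in the differentiated regime with $p_1>p_2$, whereas the reverse move is infeasible. The crux is that monotone preference orders $k(Q_1,C_1)$ and $k(Q_2,C_2)$, and—because class $1$ serves the higher-$\theta$, more congestion-averse users, so its average valuation exceeds that of class $2$—the weighted comparison of $k(Q_1,C_1)\,A_1$ with $k(Q_2,C_2)\,A_2$ comes out on the side that makes the \emph{feasible} shift welfare-improving. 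The perturbed, strictly differentiated prices then satisfy $S(p_1,p_2)>S(p,p)$, as claimed.

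\textbf{The main obstacle.} The delicate point is exactly this sign-matching: knowing $k(Q_1,C_1)\ne k(Q_2,C_2)$ does \emph{not} by itself settle the sign of $k(Q_1,C_1)\,A_1-k(Q_2,C_2)\,A_2$, since the class with the larger marginal sensitivity need not be the class with the larger $\theta$-weighted mass. I expect the real content to be an inequality comparing $k(Q_1,C_1)\,Q_1$ with $k(Q_2,C_2)\,Q_2$ at the identical-pricing equilibrium (for utilization this is the Euler identity $Q\,k=K$, which gives equality and makes everything clean), combined with the strict ordering $A_1/Q_1>A_2/Q_2$ of the class averages, to force the weighted comparison; checking that monotone preference together with $K(Q_1,C_1)=K(Q_2,C_2)$ delivers this is where the work concentrates. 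If the purely local perturbation were ever sign-indeterminate, the fallback is a global argument: show that the welfare-maximizing feasible allocation cannot satisfy $K(Q_1,C_1)=K(Q_2,C_2)$, so that no identical pricing is welfare-optimal and some differentiated pricing must strictly exceed $S(p,p)$.
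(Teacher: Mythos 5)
Your setup coincides exactly with the paper's proof: it also passes to the cut-off coordinates $(\theta_1,\theta_2)$, perturbs only $\theta_2$ (taking $d\theta_1=0$), and arrives at precisely your residual, $dS|_{p_1=p_2,\,d\theta_1=0}=\bigl(k(Q_1,C_1)A_1-k(Q_2,C_2)A_2\bigr)f(\theta_2)\,d\theta_2$ with $A_1=\int_{\theta_2}^{\theta_1}\theta f(\theta){\sf d}\theta$ and $A_2=\int_{0}^{\theta_2}\theta f(\theta){\sf d}\theta$. The gap is the step you yourself flag, and it is fatal in the orientation you committed to. You fixed $C_1<C_2$, hence $Q_1<Q_2$ at identical pricing, and declared $d\theta_2>0$ the only feasible move. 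But under (${\sf m.1}$) your orientation forces $k(Q_1,C_1)<k(Q_2,C_2)$, and whenever the premium class is small enough that $A_1<A_2$ (which happens for any well-formed $F$ once $C_1/C_2$ is small, since then $\theta_2$ is close to $\theta_1$), you get $k(Q_1,C_1)A_1<k(Q_2,C_2)A_2$ outright: the one move you allow \emph{strictly decreases} welfare, and no fallback you sketch rescues it. So the sign does not "come out on the side that makes the feasible shift welfare-improving"; your main-obstacle paragraph is correct that nothing in Definition~\ref{def:mono} forces it.

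The missing idea is a relabeling freedom that the paper invokes as ``without loss of generality $Q_1\ge Q_2$'': at identical pricing $K(Q_1,C_1)=K(Q_2,C_2)$, every user is indifferent between the two classes, so \emph{which physical class is made the prospective premium class} (i.e., serves the top segment $(\theta_2,\theta_1]$) is a free choice, and the proof chooses the larger-capacity, larger-usage one. With that choice, case (${\sf m.1}$) gives $k(Q_1,C_1)\ge k(Q_2,C_2)$, and your own two bounds $A_1>\theta_2 Q_1$, $A_2<\theta_2 Q_2$ close the argument:
\begin{equation}
k(Q_1,C_1)A_1 \;>\; \theta_2\,k(Q_1,C_1)Q_1 \;\ge\; \theta_2\,k(Q_2,C_2)Q_2 \;>\; k(Q_2,C_2)A_2 ,
\end{equation}
so $d\theta_2>0$ yields $dS>0$ and lands at $K(Q_1,C_1)<K(Q_2,C_2)$, i.e.\ genuinely differentiated prices $p_1>p_2$. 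In case (${\sf m.2}$) the paper instead prescribes $d\theta_2<0$; here your feasibility remark is a sharp observation — that direction leaves the region (${\sf c.2}$) unless it is read as swapping which class is premium — and the paper's sign claim in that case is not delivered by the same bounds, so the difficulty you diagnose is real and is glossed over there. But as submitted, your proposal completes the argument in neither case: the relabeling trick plus the (${\sf m.1}$)/(${\sf m.2}$) dichotomy dictating the perturbation direction is the content of the paper's proof that your plan lacks, and the undeveloped ``global argument'' fallback would face the same sign question at every interior point with $K(Q_1,C_1)=K(Q_2,C_2)$.
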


The proof relies on the notion of total derivative $d S(p_1,p_2)$, which can be found in the Appendix (Sec.~\ref{sec:tol_der}).
One might not be surprised to see that the differentiated pricing ($p_1 \ne p_2$) could be better than identical pricing ($p_1 = p_2$), but what is remarkable is the strict superiority of differentiated pricing. In Sec.~\ref{sec:numerical}, we have validated this result for utilization-sensitive service. Note that for a congestion function that does not satisfy monotone preference, the maximum social welfare gained by identical pricing may or may not be higher than that of differentiated pricing. This can be observed in latency-sensitive service. 
Theorem~\ref{thm:soc} can be extended to the case of $m$ service classes as follows. Hence, for utilization-sensitive service, it is advantageous to offer as many service classes as possible, in terms of an increase of social welfare.

\begin{corollary} \label{cor:soc}
Given $m$ service classes that satisfy monotone preference (Definition~\ref{def:mono}), let $\mbox{\boldmath $p$} \triangleq (p_i = p)_{i=1}^{m}$, then there exists $\mbox{\boldmath $p'$} \triangleq (p'_i)_{i=1}^{m}$, such that $p'_i \ne p'_{j}$ for all distinct $i, j \in \{1, ..., m\}$, 
\begin{equation}
S(\mbox{\boldmath $p'$}) > S(\mbox{\boldmath $p$}) 
\end{equation}
\end{corollary}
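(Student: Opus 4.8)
The plan is to deduce the $m$-class statement from the two-class result (Theorem~\ref{thm:soc}) by collapsing the lower $m-1$ classes into a single effective class, and then to upgrade the resulting partially differentiated improvement into one with \emph{all} prices distinct by a continuity-and-density argument.

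First I would reduce the ``all $p'_i$ distinct'' requirement to merely exhibiting one differentiated price vector that strictly beats $\mbox{\boldmath $p$}$. Recall the bijection between prices and cut-offs: every $\mbox{\boldmath $p$}$ with $V \ge p_1 \ge \cdots \ge p_m \ge 0$ induces a unique equilibrium, and $S(\cdot)$ is continuous in the prices. Suppose I can produce some $\mbox{\boldmath $p''$}$ with $S(\mbox{\boldmath $p''$}) > S(\mbox{\boldmath $p$})$. By continuity $S > S(\mbox{\boldmath $p$})$ on a neighborhood of $\mbox{\boldmath $p''$}$, and the strictly ordered vectors $\{\,V \ge p_1 > p_2 > \cdots > p_m \ge 0\,\}$ (which are automatically all distinct) are dense in the feasible region and contain $\mbox{\boldmath $p''$}$ in their closure; any such vector close enough to $\mbox{\boldmath $p''$}$ serves as the desired $\mbox{\boldmath $p'$}$. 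So it suffices to find a single strictly improving differentiated price vector.

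Second---the heart of the argument---I would obtain that vector by perturbing only $p_1$ while holding $p_2 = \cdots = p_m = p$. Under this constraint, Eqn.~(\ref{eqn:c3}) forces $K(Q_2,C_2) = \cdots = K(Q_m,C_m)$ at the resulting equilibrium, so the block $\{2,\dots,m\}$ always carries a common congestion level and thus behaves as a single effective class with aggregate usage $Q_B = F(\theta_2)$ and effective congestion function $\hat K(Q_B)$ equal to that common level, with marginal $\hat k = \big(\sum_{j\ge 2} 1/k(Q_j,C_j)\big)^{-1}$. Since social welfare depends only on the congestion each user experiences, $S$ restricted to this family coincides with the welfare of the two-class PMP instance having congestion functions $K(\cdot,C_1)$ and $\hat K(\cdot)$ and capacities $C_1 < C_B = \sum_{j\ge 2} C_j$. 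I would then verify this instance satisfies Definition~\ref{def:mono}: $\hat K$ is strictly increasing and differentiable, and since $\hat k \le \min_{j\ge 2} k(Q_j,C_j)$ while $Q_1 < Q_j$ for all $j \ge 2$ at every configuration in play (true at identical pricing, where $Q_1 < \cdots < Q_m$, and preserved as $p_1$ is raised), case (\textsf{m.2}) gives $k(Q_1,C_1) \ge \max_{j\ge 2} k(Q_j,C_j) \ge \hat k$, so the effective pair again satisfies (\textsf{m.2}). Theorem~\ref{thm:soc} then yields $p_1 > p$ with strictly larger welfare, i.e.\ the required $\mbox{\boldmath $p''$}$.

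The main obstacle is this inheritance of monotone preference by the collapsed block, together with feasibility. Checking that $\hat K$ is a bona fide congestion function and that $\hat k$ slots correctly relative to $k(Q_1,C_1)$ is clean under (\textsf{m.2}) (which covers the utilization-, loss-, and outage-sensitive examples), but case (\textsf{m.1}) is not symmetric under collapsing the \emph{lower} block and would instead require peeling off the bottom class $m$ and arguing analogously. I would also stress why the block must be kept congestion-equalized rather than freezing individual cut-offs: for $m \ge 3$ a single interior cut-off $\theta_{i+1}$ cannot be moved alone without driving $K(Q_i,C_i)$ off the common level and violating the equilibrium ordering (\textsf{c.2}) against class $i-1$, whereas keeping $p_2 = \cdots = p_m$ is precisely the device that preserves feasibility while leaving exactly one degree of freedom to which the two-class theorem applies.
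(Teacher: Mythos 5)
Your reduction is a genuinely different route from the paper's proof of Corollary~\ref{cor:soc}: the paper argues by induction, writing the total derivative of $S$ for the three-class system and, at each tied configuration ($p_1>p_2=p_3$ or $p_1=p_2>p_3$), freezing all but one cut-off (e.g., $d\theta_1=d\theta_2=0$) so that the computation degenerates to the two-class calculation inside the larger system; it never constructs an aggregate class. Your collapsing idea is sound as far as it goes: with $p_2=\cdots=p_m$, Eqn.~(\ref{eqn:c3}) does force a common congestion level on the block, the effective function $\hat K$ is well defined and strictly increasing with marginal $\hat k=\bigl(\sum_{j\ge 2}1/k(Q_j,C_j)\bigr)^{-1}$, and the $m$-class welfare on this price family coincides with the welfare of the effective two-class instance. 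Your continuity-plus-density handling of the ``all prices distinct'' requirement is, if anything, more careful than the paper's ``iterative argument.''

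The genuine gap is that you complete the argument only in case (\textsf{m.2}) and assert that case (\textsf{m.1}) can be handled ``analogously'' by peeling off class $m$; it cannot, at least not as a mirror image. The block marginal $\hat k$ is a harmonic-type aggregate satisfying $\hat k\le\min_j k(Q_j,C_j)$, and it can be made arbitrarily small by merging many classes, so the effective pair need not inherit (\textsf{m.1}). Concretely, under (\textsf{m.1}) at identical pricing one has $k_1<k_2<\dots<k_m$; after peeling class $m$, the block marginal satisfies $\hat k\le k_1<k_m$ while the block usage $\sum_{j<m}Q_j$ may be larger or smaller than $Q_m$, so the effective pair lands in an (\textsf{m.2})-type or an (\textsf{m.1})-type relation depending on the instance (with a degenerate equality case in between); the same indeterminacy arises if you instead keep collapsing the lower block. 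The repair is not symmetry but a case analysis: whichever pointwise relation holds between the block and the remaining class, it falls under one of the two cases exploited in the proof of Theorem~\ref{thm:soc}, so the perturbation argument applies either way --- but this must be argued, not asserted. Relatedly, Definition~\ref{def:mono} quantifies over all usage pairs, and the effective pair can violate it globally (for usages where the equal-congestion split puts some $Q_j$ below $Q_1$); your verification ``at configurations in play'' suffices only because the proof of Theorem~\ref{thm:soc} uses monotone preference solely at the identical-pricing equilibrium, and since you invoke that theorem as a black box, you need to state explicitly that its hypothesis can be weakened to this pointwise form.
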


We can also prove the same results for provider profit.

\begin{theorem} \label{thm:prof}
Given two service classes that satisfy monotone preference (Definition~\ref{def:mono}), the provider profit of identical pricing at $p$ is strictly inferior to that of differentiated pricing for some $p_1 \ne p_2$:  
\begin{equation}
\pi(p_1,p_2) > \pi(p,p) 
\end{equation}
\end{theorem}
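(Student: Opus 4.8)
The plan is to reduce to the profit-maximizing identical price and then run a first-order (total-derivative) argument, paralleling the proof of Theorem~\ref{thm:soc}. Let $p^\star$ be the identical price maximizing $\pi(p,p)$; since $\pi(p^\star,p^\star)\ge\pi(p,p)$ for every $p$, any differentiated pair with $\pi(p_1,p_2)>\pi(p^\star,p^\star)$ simultaneously beats $\pi(p,p)$, so it suffices to improve on $p^\star$. Fixing $C_1<C_2$ and using the bijection between prices and cut-offs, I would regard $\pi$ as a function of $(\theta_1,\theta_2)$ through $p_1=V-\theta_1 K(Q_1,C_1)$, $p_1-p_2=\theta_2\bigl(K(Q_2,C_2)-K(Q_1,C_1)\bigr)$, $Q_1=F(\theta_1)-F(\theta_2)$ and $Q_2=F(\theta_2)$. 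The key preliminary observation is that at identical pricing (${\sf c.3}$) forces $K(Q_1,C_1)=K(Q_2,C_2)$ (because $\theta_2>0$); since $K$ is increasing in $Q$, decreasing in $C$, and $C_1<C_2$, this equality can hold only with $Q_1<Q_2$, whence monotone preference (Definition~\ref{def:mono}) supplies the strict gap $k(Q_1,C_1)\ne k(Q_2,C_2)$. This gap is the single ingredient that will force strict improvement.

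At $p^\star$ the restriction of $\pi$ to the diagonal $\{p_1=p_2\}$ (equivalently the locus $K(Q_1,C_1)=K(Q_2,C_2)$) is stationary, so the gradient $\nabla_\theta\pi$ is parallel to the normal $\nabla_\theta\bigl(K(Q_1,C_1)-K(Q_2,C_2)\bigr)$, and differentiated pricing gives a first-order profit improvement precisely when $\nabla_\theta\pi\ne 0$. I would therefore compute $\partial_{\theta_1}\pi$ and $\partial_{\theta_2}\pi$ at the base point, taking the derivatives implicitly through the equilibrium-induced changes in $\theta_1,\theta_2,Q_1,Q_2$. Using $K(Q_1,C_1)=K(Q_2,C_2)$, the many product-rule terms collapse, and the aim is to exhibit the surviving contribution as a nonzero multiple of $k(Q_2,C_2)-k(Q_1,C_1)$, the remaining factors $f(\theta_i)$, $F(\theta_i)$, $\theta_i$ all being positive; the strict inequality from monotone preference then yields $\nabla_\theta\pi\ne0$.

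Once $\nabla_\theta\pi\ne0$ is established, I would move off the diagonal along the ascent direction. It remains to verify that this direction lies on the feasible side $K(Q_1,C_1)\le K(Q_2,C_2)$ (i.e.\ $p_1>p_2$, with the small-capacity class $C_1$ as the premium class) and preserves (${\sf c.1}$)--(${\sf c.2}$) for sufficiently small steps, producing a genuine equilibrium with $p_1\ne p_2$ and $\pi(p_1,p_2)>\pi(p^\star,p^\star)\ge\pi(p,p)$, which is the claim. The $m$-class version (analogue of Corollary~\ref{cor:soc}) then follows by applying the same perturbation to a single adjacent equally-priced pair.

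The main obstacle is the reduction in the second paragraph: proving that the off-diagonal total derivative truly collapses to a nonvanishing expression governed solely by $k(Q_2,C_2)-k(Q_1,C_1)$, and in particular ruling out the degenerate case $\nabla_\theta\pi=0$ at $p^\star$. By the diagonal stationarity, $\partial_{\theta_1}\pi$ and $\partial_{\theta_2}\pi$ vanish together, so a single accidental relation among the $\theta_i$, $F$, $f$ and the $k$'s would destroy the first-order argument; excluding it is exactly where the strict order in (${\sf m.1}$)/(${\sf m.2}$) is indispensable, and where the single-crossing structure must additionally be used to pin the sign of the improving direction to the feasible branch $K(Q_1,C_1)\le K(Q_2,C_2)$. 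Confirming this alignment of the profit-increasing direction with the equilibrium-preserving direction is the delicate last step.
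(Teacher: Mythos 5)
Your proposal correctly identifies the family of argument the paper uses---a first-order perturbation at identical pricing, with monotone preference supplying a nonzero directional derivative---but it stops exactly where the proof actually lives. The whole content of the theorem is the verification that the derivative does not vanish, and you explicitly defer it (``the aim is to exhibit the surviving contribution\dots'', ``the main obstacle is\dots ruling out the degenerate case $\nabla_\theta\pi=0$''). The paper resolves this not by a Lagrange-type argument at the diagonal maximizer (your reduction to $p^\star$ is harmless but unnecessary: the paper improves on an \emph{arbitrary} identical price $p$), but by a specific choice of perturbation direction that your outline never identifies: keep the outermost cut-off user fixed, $d\theta_1=0$. With that choice all terms involving $\partial\theta_1/\partial p_j$ drop out, so the total derivative collapses to $d\pi = Q_1\,dp_1 + Q_2\,dp_2$, and the equilibrium conditions Eqn.~(\ref{eqn:c3}), evaluated at $p_1=p_2$ (where $K(Q_1,C_1)=K(Q_2,C_2)$), can be solved explicitly: $dp_1 = \theta_1\, k(Q_1,C_1) f(\theta_2)\, d\theta_2$ and $dp_1 - dp_2 = \theta_2\bigl(k(Q_1,C_1)+k(Q_2,C_2)\bigr) f(\theta_2)\, d\theta_2$. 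Substituting and using $\theta_1>\theta_2$ bounds $d\pi$ by a multiple of $\bigl(Q_1 k(Q_1,C_1) - Q_2 k(Q_2,C_2)\bigr)\,dp_1$, whose sign is then controlled by choosing the sign of $d\theta_2$ according to (${\sf m.1}$) or (${\sf m.2}$). This explicit computation is what rules out the degeneracy you worry about; your plan offers no mechanism for it.

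The proposal also mispredicts the target expression, and this is a substantive error rather than a cosmetic one. You expect the collapsed derivative to be ``a nonzero multiple of $k(Q_2,C_2)-k(Q_1,C_1)$,'' with $\theta_i$, $F$, $f$ entering only as positive factors. What actually survives is the \emph{usage-weighted} difference $Q_1 k(Q_1,C_1) - Q_2 k(Q_2,C_2)$. A mere gap $k(Q_1,C_1)\neq k(Q_2,C_2)$ would not sign this quantity, since $Q_1\neq Q_2$; it is precisely the coupling in Definition~\ref{def:mono}---the order of the $Q_i$ dictating the order of the $k_i$---that gives the weighted difference a definite sign (e.g.\ under (${\sf m.1}$), $Q_1>Q_2$ forces $k(Q_1,C_1)>k(Q_2,C_2)$, hence $Q_1k(Q_1,C_1)>Q_2k(Q_2,C_2)$). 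So completing your outline requires both the missing choice of direction ($d\theta_1=0$) and the correct collapsed expression; as written, the proposal is a strategy statement whose decisive step is absent.
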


\begin{corollary} \label{cor:prof}
Given $m$ service classes that satisfy monotone preference (Definition~\ref{def:mono}), let $\mbox{\boldmath $p$} \triangleq (p_i = p)_{i=1}^{m}$, then there exists $\mbox{\boldmath $p'$} \triangleq (p'_i)_{i=1}^{m}$, such that $p'_i \ne p'_{j}$ for some distinct $i, j \in \{1, ..., m\}$, 
\begin{equation}
\pi(\mbox{\boldmath $p'$}) > \pi(\mbox{\boldmath $p$}) 
\end{equation}
\end{corollary}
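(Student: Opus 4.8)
The plan is to mirror the total-derivative argument used for Theorem~\ref{thm:soc}, now applied to the profit $\pi(p_1,p_2) = p_1 Q_1 + p_2 Q_2$ rather than the welfare integral. I would treat $\pi$ as a function of the price pair $(p_1,p_2)$, with the cut-offs $(\theta_1,\theta_2)$ and usages $Q_1 = F(\theta_1)-F(\theta_2)$, $Q_2 = F(\theta_2)$ determined implicitly through the equilibrium relations (${\sf c.3}$). The first observation is that identical pricing is special: setting $p_1 = p_2 = p$ in the first line of (${\sf c.3}$) and using $\theta_2 > 0$ forces $K(Q_1,C_1) = K(Q_2,C_2)$. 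Since Definition~\ref{def:mono} assumes $C_1 < C_2$ and $K$ is strictly monotone in both arguments, this equality forces $Q_1 \ne Q_2$ (in fact $Q_1 < Q_2$); hence, by the monotone-preference hypothesis (${\sf m.1}$) or (${\sf m.2}$), the marginal congestions $k_1 := k(Q_1,C_1)$ and $k_2 := k(Q_2,C_2)$ are strictly ordered, $k_1 \ne k_2$. This strict separation of the slopes is the single ingredient that makes differentiated pricing strictly better, and it is exactly what fails for the non-monotone latency function.

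Next I would compute the directional derivative of $\pi$ at the identical point in the pure differentiation direction $(dp_1,dp_2) \propto (1,-1)$, i.e. raising one price and lowering the other while holding their sum fixed. Using $d\pi = Q_1\,dp_1 + p_1\,dQ_1 + Q_2\,dp_2 + p_2\,dQ_2$ and evaluating at $p_1 = p_2 = p$ gives $d\pi = Q_1\,dp_1 + Q_2\,dp_2 + p\,(dQ_1+dQ_2)$, where $dQ_1 + dQ_2 = f(\theta_1)\,d\theta_1$. To turn this into a derivative in the prices I would differentiate the two equations of (${\sf c.3}$) totally, obtaining a linear system that expresses $(dp_1,dp_2)$ in terms of $(d\theta_1,d\theta_2)$; the coefficient matrix is invertible because $F$ is well-formed ($f>0$ on its support), so the map between prices and cut-offs is a local diffeomorphism near the identical point. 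Solving this system and substituting back, I expect the quantity $\tfrac{\partial\pi}{\partial p_1} - \tfrac{\partial\pi}{\partial p_2}$ to collapse to a nonzero multiple of $(k_1 - k_2)$, the prefactor being built from $p$, $\theta_2$ and the densities $f(\theta_1),f(\theta_2)$.

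Granting that, the conclusion is immediate: because $k_1 \ne k_2$, the directional derivative $\tfrac{\partial\pi}{\partial p_1} - \tfrac{\partial\pi}{\partial p_2}$ is nonzero at $(p,p)$, so one of the two perturbations $(p+\varepsilon,p-\varepsilon)$ or $(p-\varepsilon,p+\varepsilon)$ strictly increases $\pi$ for all small $\varepsilon>0$. Such a perturbation leaves $p_1 \ne p_2$ and, by the same local-diffeomorphism and well-formedness argument, remains a valid equilibrium satisfying (${\sf c.1}$)--(${\sf c.3}$); this yields the desired $\pi(p_1,p_2) > \pi(p,p)$. If one instead compares against the \emph{optimal} identical price, the first-order condition along the identical locus kills the $(1,1)$-component of the gradient, so nonvanishing of the $(1,-1)$-component again certifies strict improvability; note, however, that the argument above works at \emph{any} identical price, since we move purely in the $(1,-1)$ direction.

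The main obstacle is the middle step: carrying out the total differentiation of (${\sf c.3}$) and the inversion of the resulting $2\times 2$ system cleanly enough to verify that the spurious term $(Q_1 - Q_2)$ does not cancel the congestion-derivative contribution, and that what survives is genuinely proportional to $(k_1 - k_2)$ with a strictly nonzero coefficient. This is where the profit proof diverges algebraically from the welfare proof of Theorem~\ref{thm:soc} — the explicit revenue weighting $\sum_i p_i Q_i$ replaces the welfare integral — even though both are driven by the same mechanism, the strict ordering $k_1 \ne k_2$ supplied by monotone preference. A secondary point to check is the degenerate homogeneous case $C_1 = C_2$ (admitted by the remark after Definition~\ref{def:mono}), where $Q_1 = Q_2$ and hence $k_1 = k_2$ at identical pricing, so the first-order term vanishes; there one must fall back on the strict convexity encoded in (${\sf m.1}$) to extract a favorable second-order term.
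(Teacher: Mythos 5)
Your proposal is, in substance, an attempted re-proof of the two-class Theorem~\ref{thm:prof}; the paper itself disposes of Corollary~\ref{cor:prof} in one line as an \emph{immediate consequence} of that theorem (the corollary only demands $p'_i\neq p'_j$ for \emph{some} pair, so perturbing two classes suffices). Measured against that, your plan has a genuine analytic gap: the step you yourself flag as "the main obstacle" — that at $p_1=p_2$ the quantity $\partial\pi/\partial p_1-\partial\pi/\partial p_2$ "collapses to a nonzero multiple of $(k_1-k_2)$" — is false. Differentiating (${\sf c.3}$) as you describe and solving the $2\times 2$ system at the identical-pricing point (where $K_1=K_2=K$, writing $f_i=f(\theta_i)$, $k_i=k(Q_i,C_i)$) gives
\begin{equation*}
d\pi\big|_{(dp_1,dp_2)=(\epsilon,-\epsilon)}
\;=\;\epsilon\left[(Q_1-Q_2)\;-\;p f_1\,\frac{\theta_2(k_1+k_2)-2\theta_1 k_1}{\theta_2(k_1+k_2)K+\theta_1\theta_2\,k_1 k_2 f_1}\right],
\end{equation*}
which is not proportional to $k_1-k_2$: when $k_1=k_2=k$ and $Q_1=Q_2$ (the homogeneous case) it reduces to $\epsilon\, p f_1\,2k(\theta_1-\theta_2)/(2\theta_2 kK+\theta_1\theta_2 k^2 f_1)>0$, so the first-order term does \emph{not} vanish there (your proposed second-order fallback is unnecessary and reflects a misidentification of the mechanism); conversely, when $k_1\neq k_2$ the bracket is a sum of terms of opposite signs and nothing you prove prevents it from vanishing at particular parameter values. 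The paper's proof of Theorem~\ref{thm:prof} avoids this by perturbing in a \emph{different} direction — it holds the opt-out cut-off fixed, $d\theta_1=0$, which in price space is $(dp_1,dp_2)\propto\big(\theta_1 k_1,\;\theta_1 k_1-\theta_2(k_1+k_2)\big)$, not $(1,-1)$ — and then bounds $d\pi$ strictly below by $\big(Q_1k_1-Q_2k_2\big)\theta_2\,dp_1/(\theta_1 k_1)$, where the strictness comes from $\theta_1>\theta_2$ and monotone preference (Definition~\ref{def:mono}) is used only to fix the sign of $Q_1k_1-Q_2k_2$ and hence of the whole expression. So slope separation $k_1\neq k_2$ is not "the single ingredient"; without reproducing the paper's choice of direction (or otherwise showing the full gradient is nonzero), your argument cannot be completed as written.

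The second gap is that the statement concerns $m$ service classes, while your entire computation lives in the two-class setting and you never say how to pass to general $m$. The missing (routine but necessary) step is the degeneration argument the paper makes explicit for Corollary~\ref{cor:soc}: at identical pricing all the $K(Q_i,C_i)$ coincide, so holding $\theta_1,\dots,\theta_{m-1}$ fixed and perturbing only $\theta_m$ changes only $Q_{m-1},Q_m$ and, through Eqn.~(\ref{eqn:c3}), only $p_{m-1},p_m$; the profit increment then involves just the last two classes and the two-class computation of Theorem~\ref{thm:prof} applies verbatim, producing a price vector with $p'_{m-1}\neq p'_m$ and strictly larger profit, exactly what the corollary asks for.
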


Corollary~\ref{cor:prof} is an immediate consequence of Theorem~\ref{thm:prof}, but is weaker than Corollary~\ref{cor:soc}, since provider profit is more difficult to analyze than social welfare. 
\\

{\bf Ramification}: We arrive at the point where we can provide a more complete answer to our original question on the viability of PMP. Combining Theorems~\ref{thm:part}-\ref{thm:prof}, we have a sufficient condition to guarantee PMP to be viable in the sense of both provider profit as well as social welfare. This also explains not only why PMP is viable for utilization-, loss- and outage-sensitive services, but also for latency-sensitive services is not always viable. Theorems~\ref{thm:part}-\ref{thm:prof} are sufficiently general so that they can also be applied to general digital services.

\section{Duopoly Case} \label{sec:duo} 

The study involving multiple competitive providers is more challenging. A plausible outcome is that the providers settle at a Nash equilibrium at which unilateral change in pricing or capacity allocation will induce an inferior profit. 
Previously, \cite{gibbens00JSAC} studied the properties of Nash equilibrium of PMP in the presence of two competitive providers and derived a closed-form solution of Nash equilibrium for the simple setting of utilization services of the same capacity and {\em disallowing user opt-out}. They reported that PMP is not viable in the setting of two competitive providers as compared to the case of simply a single service class offered by each provider.

However, in a more general model {\em with user opt-out} as studied in this article, the viability of PMP is a consequence of more subtle properties of the congestion function. In fact, we find PMP viable in several settings of partition-preferred congestion functions.

In this section, we first derive necessary conditions for Nash equilibrium for two competitive providers. We restrict our analysis to the setting wherein one provider (${\sf I}$) offers a single unpartitioned service class, whereas another provider (${\sf II}$) can flexibly partition its service classes and offer differentiated pricing. The necessary conditions are then applied to the numerical study of several specific congestion functions, from which we compare the viability of PMP. 

\subsection{Characterization of Nash Equilibrium}

We define the profit of provider ${\sf I}$ as 
$\pi^{\sf I} \triangleq p^{\sf I} Q^{\sf I}$
where $p^{\sf I}$ is the price offered for its single service class and $Q^{\sf I}$ is the amount of users accessing it. From Eqn.~(\ref{eqn:c2}), $Q^{\sf I} = Q_i$ if $p^{\sf I}$ is the $i$-th highest price among all the service classes offered.

Provider ${\sf II}$, by contrast, has an option to offer two service classes (offering $p^{\sf II}_{1} \ge p^{\sf II}_{2}$ at capacity $C^{\sf II}_{1}$ and $C^{\sf II}_{2}$ respectively), or a single service class (offering $p^{\sf II}$ at $C^{\sf II} = C^{\sf II}_{1} + C^{\sf II}_{2}$). Hence 
\begin{equation}
\pi^{\sf II} \triangleq
\left\{
\begin{array}{ll}
 p^{\sf II}_1 Q^{\sf II}_1 + p^{\sf II}_2 Q^{\sf II}_2 & \mbox{for two service classes}\\
 p^{\sf II} Q^{\sf II} & \mbox{for one service class}\\
 \end{array}
 \right.
\end{equation}
where $Q^{\sf II}_1, Q^{\sf II}_2, Q^{\sf II}$ are the amount of users accessing the respective service classes.
 
The necessary condition for Nash equilibrium is that derivative $\frac{{\sf d} \pi^{\sf I}}{{\sf d} p^{\sf I}} = 0$, at given $p^{\sf II}_{1}, p^{\sf II}_{2}, C^{\sf II}_{1}, C^{\sf II}_{2}$ (or $p^{\sf II}, C^{\sf II}$).
The respective derivatives are listed for all five cases in Table~\ref{tab:necconds} in the Appendix. In the table, we let $K_i = K(Q_i, C_i)$ and $k_i = \frac{\partial K(Q, C)}{\partial Q}|_{Q=Q_i, C=C_i}$, where $Q_i$ is the total amount of users and $C_i$ is the capacity of $i$-th service class. 

Based on the derivatives, we then numerically evaluate $p^{\sf I}$ such that $\frac{{\sf d} \pi^{\sf I}}{{\sf d} p^{\sf I}} = 0$ at given $C^{\sf I}, p^{\sf II}_{1}, p^{\sf II}_{2}, C^{\sf II}_{1}, C^{\sf II}_{2}$ (or $p^{\sf II}, C^{\sf II}$). Furthermore, we corroborate the existence of Nash equilibrium at the corresponding $(p^{\sf I}, C^{\sf I}, p^{\sf II}_{1}, p^{\sf II}_{2}, C^{\sf II}_{1}, C^{\sf II}_{2})$ and $(p^{\sf I}, C^{\sf I}, p^{\sf II}, C^{\sf II})$ by examining whether $\pi^{\sf I}$ and $\pi^{\sf II}$ are local maxima in the corresponding neighborhood region.

\subsection{Numerical Study and Observations}

Using the preceding results, we particularly study the viability of PMP for utilization-sensitive service. We generalize the congestion function by incorporating default consumption. Define a modified congestion function as  
\begin{equation}
K_{\sf utl.d}(Q, C) \triangleq \frac{Q - \varepsilon}{C}  
\end{equation}
where $\varepsilon \le Q$ is a certain consumption incurred whenever the service class is accessed. This is useful to model the scenario wherein certain overhead or default consumption is imposed in the service.

It is straightforward to show that congestion function $K_{\sf utl.d}(Q, C)$ strictly prefers service partitioning to multiplexing (when $\varepsilon > 0$), and satisfies monotone preferences to service classes.

Figs.~\ref{fig:duo-e00}-\ref{fig:duo-V3} show the profit of each provider at various given values of $p^{\sf I}$, and the corresponding best response from provider ${\sf II}$ on its pricing and capacity so as to maximize its profit $\pi^{\sf II}$.

{\bf Observations}:
We obtain the following key observations from Figs.~\ref{fig:duo-e00}-\ref{fig:duo-V3}:
\begin{enumerate}

\item PMP is viable in several settings of utilization service with and without default consumption. Because provider ${\sf II}$ always has a higher profit when it partitions its service classes (i.e., lines of ``2 vs 1'' in the figures), so is the profit of provider ${\sf I}$.

\item Multiplexing-preferred congestion functions ($K_{\sf utl.d}(Q, C)$) produce higher viability for PMP.

\item When more capacity is given to provider ${\sf II}$, the benefit of PMP is more prominent (see the larger gap between profits of provider ${\sf I}$ and provider ${\sf II}$ in Fig.~\ref{fig:duo-V3}).

\end{enumerate}

We note that although general analytical results are harder for the competitive case, our numerical results agree with the observations in other multiplexing-preferred congestion functions.

\begin{figure*}[htb!] 
\centering 
  \begin{minipage}[c]{.47\textwidth} \centering
    \includegraphics[scale=0.9]{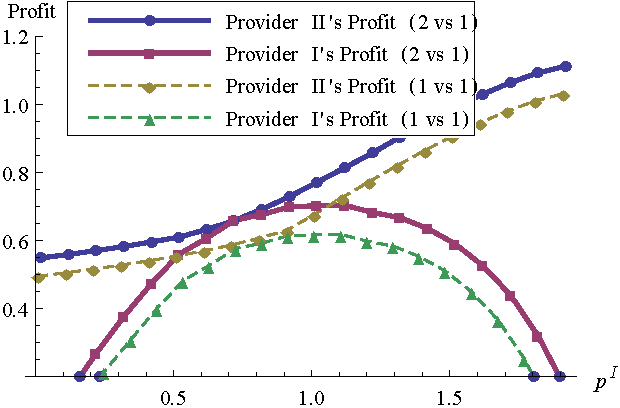}  
    \caption{Utilization-sensitive service ($K_{\sf utl}(Q, C), V = 2, C^{\sf I} = C^{\sf II} = 1$).} \label{fig:duo-e00}
  \end{minipage}
  \hfill \quad
  \begin{minipage}[c]{.47\textwidth} \centering
    \includegraphics[scale=0.9]{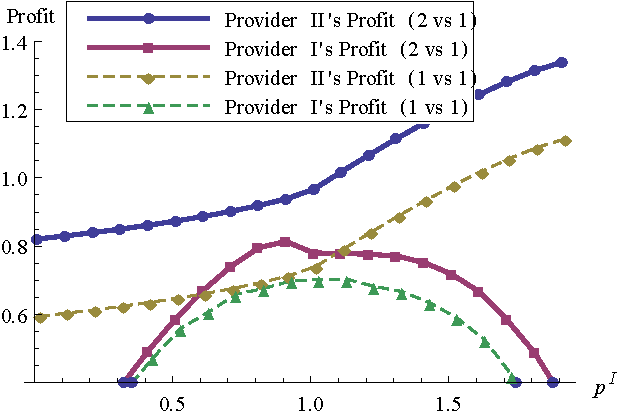} 
    \caption{Utilization-sensitive service with default consumption ($K_{\sf utl.d}(Q, C), V = 2, C^{\sf I} = C^{\sf II} = 1, \varepsilon = 0.1$).} \label{fig:duo-e01}
  \end{minipage} 
\end{figure*}

\begin{figure*}[htb!] 
\centering 
  \begin{minipage}[c]{.47\textwidth} \centering
    \includegraphics[scale=0.9]{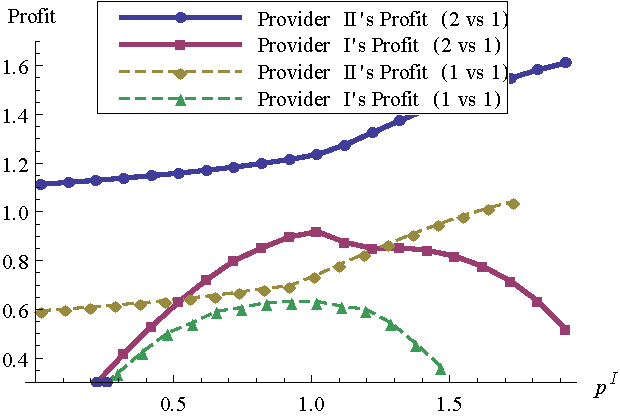}  
    \caption{Utilization-sensitive service with default consumption ($K_{\sf utl.d}(Q, C), V = 2, C^{\sf I} = C^{\sf II} = 1, \varepsilon = 0.2$).} \label{fig:duo-e02}
  \end{minipage}
  \hfill \quad
  \begin{minipage}[c]{.47\textwidth} \centering
    \includegraphics[scale=0.9]{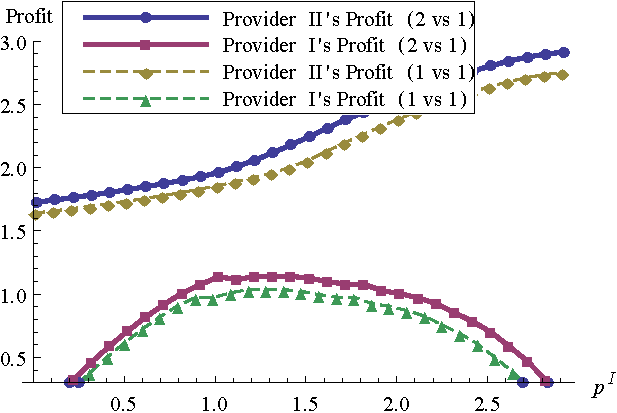} 
    \caption{Utilization-sensitive service ($K_{\sf utl}(Q, C), V = 3, C^{\sf I}, C^{\sf II} = 2$).} \label{fig:duo-V3}
  \end{minipage} 
\end{figure*}

\section{Conclusion and Discussion} \label{sec:disc} 

This article provides general conditions for the viability of Paris Metro Pricing (PMP)
based on a general setting of negative externality (via a general congestion function)
that can model a wide range of digital services. There are two separate messages here,
both of which can be intuitively stated. The first one says that one’s service either
prefers multiplexing (having more people share proportionally more capacity) or not;
if one wants to guarantee a gain (in terms of profit or social welfare) by dividing one’s
service into multiple classes with the same price, then this service would better not
prefer multiplexing. The second message says that, if one starts with a multiclass
service with the same price and one wants to move to charging different prices, then
the service classes one sets up would better generate a monotone linear preference
perceived by the users. By combining these two rules together, we can characterize
a large class of services that can benefit from PMP. These observations also help us
understand why sometimes PMP is not viable. Our results help clarify the confusion
caused by conflicting results on the viability of PMP by previous studies. Our model
is general and the results are rigorously proved and applicable to future studies on
network economics.

\bibliographystyle{plain}
\bibliography{paperbib}

\section{Appendix}

\subsection{Proof for Theorem \ref{thm:part}}
\begin{proof}
First, by Eqn.~(\ref{eqn:c3}) in Definition~\ref{def:equ}, considering a single service class, we obtain 
\begin{equation} \label{eqn:pf1:theta0}
p = V - \tilde{\theta} \cdot K\big(\tilde{Q} , C\big) \ \Rightarrow\
\tilde{\theta} = \frac{V-p}{ K\big(\tilde{Q} , C\big)} 
\end{equation}
where $\tilde{\theta}$ is the cut-off user for a single service class.
When there are two service classes after resource partitioning, we obtain 
\begin{equation} \label{eqn:prf1:eq1}
\left\{
\begin{array}{@{}r@{}l}
p - p = & \theta_2 \cdot\Big( K\big(Q_2 , C_2\big) - K\big(Q_1 , C_1\big) \Big) \\
p = & V - \theta_1 \cdot K\big(Q_1 , C_1\big) 
\end{array}
\right. 
\end{equation}
Hence due to Eqn.~(\ref{eqn:prf1:eq1}) it follows that 
\begin{eqnarray}
K\big(Q_2 , C_2\big) = & K\big(Q_1 , C_1\big) \label{eqn:pf1:K} \\
\theta_1 = & \displaystyle \frac{V - p}{ K(Q_1 , C_1)} \label{eqn:pf1:theta1} 
\end{eqnarray}
Using Eqns.~(\ref{eqn:pf1:theta0}) and (\ref{eqn:pf1:theta1}), we obtain the following equality. 
\begin{equation} \label{eqn:pf1:ratio}
\frac{\tilde{\theta}}{\theta_1} = \frac{K(Q_1, C_1)}{K(\tilde{Q}, C)} = \frac{K(Q_2, C_2)}{K(\tilde{Q}, C)} 
\end{equation}

To complete the proof, we proceed in two steps as follows.

{\sf Step 1}: Then, we want to show:
\begin{enumerate}
\item If $K(Q, C) \ge K(\alpha Q, \alpha C)$ for all $0 \le \alpha \le 1$, then $\theta_1 \ge \tilde{\theta}$

\item If $K(Q, C) \le K(\alpha Q, \alpha C)$ for all $0 \le \alpha \le 1$, then $\theta_1 \le \tilde{\theta}$

\end{enumerate}
Without loss of generality, we assume $C_1 \ge C_2$. Let $\beta \triangleq \frac{C_1}{C}$, and hence we have $\beta \ge 1-\beta$ (i.e., $1 \ge \frac{1 - \beta}{\beta}$).

First, we consider partition-preferred congestion function $K(Q, C) \ge K(\alpha Q, \alpha C)$ for all $0 \le \alpha \le 1$. By Eqn.~(\ref{eqn:pf1:K}), it follows that 
\begin{equation} \label{eqn:pf-1}
K(Q_2,C_2) = K(Q_1,C_1) \ge K(\frac{1-\beta}{\beta} Q_1,C_2) 
\end{equation}
Because $K(Q, C)$ is increasing in $Q$, we obtain 
\begin{equation} \label{eqn:pf1:bd}
Q_2 \ge \frac{1-\beta}{\beta} Q_1 \ \Rightarrow\ Q_1 + Q_2 \ge \frac{Q_1}{\beta} 
\end{equation}
Because $K(Q, C) \ge K(\alpha Q, \alpha C)$ for all $0 \le \alpha \le 1$, we obtain 
\begin{equation} \label{eqn:pf1:effect1}
K(\tilde{Q}, C) \ge K\big(\beta \tilde{Q}, \beta C \big) = K\big(\beta \tilde{Q}, C_1\big) 
\end{equation}
We next use contradiction to show that $\theta_1 \ge \tilde{\theta}$. On the contrary, we suppose $\tilde{\theta} > \theta_1$. Then, by Eqns.~(\ref{eqn:pf1:ratio}) and (\ref{eqn:pf1:effect1}), we obtain 
\begin{equation} 
K\big(Q_1, C_1\big) > K\big(\tilde{Q}, C\big) \ge K\big(\beta \tilde{Q}, C_1\big) 
\end{equation}
Hence, $Q_1 > \beta \tilde{Q}$ because $K(Q, C)$ is increasing in
$Q$.
Also, we note that 
\begin{equation}
\tilde{\theta} > \theta_1\ \Rightarrow\ F(\tilde{\theta}) > F(\theta_1)\ \Rightarrow\ \tilde{Q} > Q_1 + Q_2 
\end{equation}
Therefore we derive 
\begin{equation} \label{eqn:pf-2}
Q_1 > \beta \tilde{Q} > \beta (Q_1 + Q_2) 
\end{equation}
which is a contradiction to Eqn.~(\ref{eqn:pf1:bd}). Hence it should be $\theta_1 \ge \tilde{\theta}$. 

For multiplexing, to show that $K(Q, C) \le K(\alpha Q, \alpha C)$ for all $\alpha$ implies $\theta_1 \le \tilde{\theta}$, we note that we can reverse the signs ``$\ge$'' and ``$>$'' in Eqns~(\ref{eqn:pf-1})-(\ref{eqn:pf-2}) to ``$\le$'' and ``$<$'' respectively.

{\sf Step 2}: Next, we want to show:
\begin{enumerate}
\item if $\theta_1 \ge \tilde{\theta}$, then $S(p,p) \ge S(p)$ and $\pi(p,p) \ge \pi(p)$; and

\item if $\theta_1 \le \tilde{\theta}$, then $S(p,p) \ge S(p)$ and $\pi(p,p) \le \pi(p)$.

\end{enumerate}

The case of provider profit $\pi(p,p)$ and $\pi(p)$ follows from
Eqns.~(\ref{eqn:prof_2unpart}) and (\ref{eqn:prof_2part}) and 
\begin{eqnarray}
\theta_1 \ge \tilde{\theta} & \Rightarrow \ Q_1 + Q_2 \ge \tilde{Q} \ \Rightarrow \ \pi(p,p)\ge \pi(p) \label{eqn:pf1:prof} \\ 
\theta_1 \le \tilde{\theta} & \Rightarrow \ Q_1 + Q_2 \le \tilde{Q} \ \Rightarrow \ \pi(p,p)\le \pi(p) \label{eqn:pf1:ineq1} 
\end{eqnarray} 
where the service classes are charged at an identical price.

For the case of social welfare $S(p,p)$ and $S(p)$, by Eqn.~(\ref{eqn:pf1:K}), we obtain 
\begin{equation}
S(p,p) = \int_{0}^{\theta_1}
\Big( V-  \theta  \cdot K(Q_2, C_2) \Big) \cdot f(\theta ){\sf d}\theta 
\end{equation}
From Eqn.~(\ref{eqn:pf1:ratio}), we obtain 
\begin{eqnarray} 
\theta_1 \ge \tilde{\theta} & \Rightarrow  K\big(\tilde{Q}, C\big) \ge K\big(Q_2, C_2\big) \ \Rightarrow S(p,p)\ge S(p) \label{eqn:pf1:soc}  \\
\theta_1 \le \tilde{\theta} & \Rightarrow K\big(\tilde{Q}, C\big) \le K\big(Q_2, C_2\big)  \ \Rightarrow S(p,p)\le S(p)  \label{eqn:pf1:ineq2} 
\end{eqnarray}
which follows from Eqns.~(\ref{eqn:soc_2unpart}) and (\ref{eqn:pf1:ineq1})-(\ref{eqn:pf1:ineq2}).
\end{proof}

\subsection{An Overview of Total Derivative} \label{sec:tol_der}

Before we proceed to the proof of Theorem \ref{thm:soc}, we briefly revisit the notion of total derivative \cite{EngMaths} that will be useful in the following proofs. 

The marginal change of real function $f(x, y, z)$ (i.e., derivative $d f$) with respect to parameters $x, y, z$ can be written as 
\begin{equation}
d f = \frac{\partial f}{\partial x} d x + \frac{\partial f}{\partial y} d y + \frac{\partial f}{\partial z} d z 
\end{equation}
where $\frac{\partial f}{\partial x}, \frac{\partial f}{\partial y}, \frac{\partial f}{\partial z}$ are the partial derivatives of $f$ at the respective parameter while keeping other parameters as constants. Note that $\frac{\partial f}{\partial x}, \frac{\partial f}{\partial y}, \frac{\partial f}{\partial z}$ are also functions of $x, y, z$.

The derivative $d f$ can be regarded as a function of $(d x, d y, d z)$, each of them representing the marginal change of parameters $x, y, z$. Also, $(d x, d y, d z)$ can be regarded as a vector in the 3D Euclidean space.
An immediate consequence is that if function $f$ is a stationary point at some $(x_0, y_0, z_0)$ (e.g., the maximum), then $d f = 0$ at $(x_0, y_0, z_0)$ irrespective of what values of $(d x, d y, d z)$ we pick. Otherwise, we will be possible to pick a vector $(d x, d y, d z)$, such that $d f \ne 0$ implying that $f$ is not a stationary point at $(x_0, y_0, z_0)$.

\subsection{Proof for Theorem \ref{thm:soc}}

\begin{proof} First, we write $S = S(p_1,p_2)$. We study how the total derivative of $S$ (see Sec.~\ref{sec:tol_der}), $dS$, changes at $p_1 = p_2$. From Eqn.~(\ref{eqn:soc_2part}), 
\begin{equation}
\begin{array}{@{}r@{\ }l@{\ }l}
S = & \displaystyle V \cdot F(\theta_1)  - K(Q_2, C_2) \int_{0}^{\theta_2} {\theta \cdot f(\theta ){\sf d}\theta } - K(Q_1, C_1) \int_{\theta_2}^{\theta_1} {\theta  \cdot f(\theta ){\sf d}\theta }
\end{array} 
\end{equation}
Recall that $k(Q_i, C_i) \triangleq \frac{\partial K(Q, C)}{\partial Q}|_{Q=Q_i, C=C_i}$, $Q_2 \triangleq F(\theta_2)$ and $Q_1 \triangleq F(\theta_1) - F(\theta_2)$.

Note that an equilibrium can be characterized by tuple $(p_1, p_2)$, a pair of independent variables. Similarly, an equilibrium can also be equivalently characterized by tuple $(\theta_1, \theta_2)$, by solving Eqn.~(\ref{eqn:c3}) in Definition~\ref{def:equ}. Then $(\theta_1, \theta_2)$ are treated as a pair of independent variables. Thus,
we take the total derivative of $S$ with respect to $(d \theta_1, d \theta_2)$, and obtain the following: 
\begin{equation} 
\begin{array}{@{}l}
 dS  = \\ 
 \displaystyle V \cdot f(\theta_1)d\theta_1 - \Big( \int_{0}^{\theta_2} {\theta \cdot f(\theta ){\sf d}\theta } \Big) \cdot k(Q_2, C_2) f(\theta_2 )d\theta_2  - K(Q_2, C_2) \cdot \theta_2 \cdot f(\theta_2 )d\theta_2  \\
   \displaystyle - \Big( \int_{\theta_2 }^{\theta_1 } {\theta  \cdot f(\theta ){\sf d}\theta } \Big) \cdot k(Q_1, C_1) \Big( f(\theta_1 )d\theta_1  - f(\theta_2 )d\theta_2 \Big)   - K(Q_1, C_1) \cdot \Big( \theta_1  \cdot f(\theta_1 )d\theta_1  - \theta_2  \cdot f(\theta_2 )d\theta_2 \Big)
\end{array} 
\end{equation}

Then, at identical pricing $p_1 = p_2$, we have $K(Q_1, C_1) = K(Q_2, C_2)$. Hence, we obtain 
\begin{equation} \hspace{-30pt}
\begin{array}{@{}l}
 dS|_{p_1 = p_2} = \\
 \displaystyle V \cdot f(\theta_1)d\theta_1 - \Big( \int_{0}^{\theta_2} {\theta \cdot f(\theta ){\sf d}\theta } \Big) \cdot k(Q_2, C_2) f(\theta_2 )d\theta_2  \\
   - \Big( \int_{\theta_2 }^{\theta_1 } {\theta  \cdot f(\theta ){\sf d}\theta } \Big) \cdot k(Q_1, C_1) \Big( f(\theta_1 )d\theta_1  - f(\theta_2 )d\theta_2 \Big)  - K(Q_1, C_1) \cdot \theta_1  \cdot f(\theta_1 )d\theta_1
\end{array} 
\end{equation}
Next, we pick a vector $(d\theta_1, d\theta_2)$, and show that $dS|_{p_1 = p_2}$ will strictly increase in the direction of $(d\theta_1, d\theta_2)$. Such a vector indeed exists if we keep $\theta_1$ as a constant (i.e., $d\theta_1 = 0$). 

First, we obtain 
\begin{subequations}\label{eqn:pf2:ineq2} 
  \begin{align} 
& dS|_{p_1 = p_2, d\theta_1 = 0} \\
 =\  &  \bigg( \Big( \int_{\theta_2 }^{\theta_1 } {\theta  \cdot f(\theta ){\sf d}\theta } \Big) \cdot k(Q_1, C_1) \notag - \Big( \int_{0}^{\theta_2} {\theta \cdot f(\theta ){\sf d}\theta } \Big) \cdot k(Q_2, C_2)\bigg) f(\theta_2 )d\theta_2 \\
 >\ &  \theta_2 \cdot \bigg( \Big( \int_{\theta_2 }^{\theta_1 } {f(\theta ){\sf d}\theta } \Big) \cdot k(Q_1, C_1) \notag  - \Big( \int_{0}^{\theta_2} {f(\theta ){\sf d}\theta } \Big) \cdot k(Q_2, C_2)\bigg) f(\theta_2 )d\theta_2 \\
=\ & \theta_2 \Big(Q_1 \cdot k(Q_1, C_1) - Q_2 \cdot k(Q_2, C_2)\Big) f(\theta_2 )d\theta_2
  \end{align}  
\end{subequations}  
Without loss of generality, we assume $Q_1 \ge Q_2$. Since the two service classes satisfy monotone preference (Definition~\ref{def:mono}), if we always pick $d\theta_2 > 0$ in the case of (${\sf m.1}$), and $d\theta_2 < 0$ in the case of (${\sf m.2}$), then it is always true that the total derivative $dS|_{p_1 = p_2, d\theta_1 = 0} > 0$.
Therefore, we see that the social welfare $S$ can strictly increase by differentiated pricing ($p_1 \ne p_2$) from identical pricing ($p_1 = p_2$).
\end{proof}

\subsection{Proof for Corollary \ref{cor:soc}}

\begin{proof}
By Theorem~\ref{thm:soc}, it is true for $m=2$.  For $m=3$, it
follows that $p_1 = p_2 = p_3$ cannot be optimal. Next, we also show
that $p_1 > p_2 = p _3$ and $p_1 = p_2 > p _3$ cannot be optimal.
Then, the total derivative of social welfare $S$ with three service classes becomes 
\begin{equation} 
\begin{array}{@{}l}
 dS  =  \\
 \displaystyle V \cdot f(\theta_1)d\theta_1 - \Big( \int_{0}^{\theta_3} {\theta \cdot f(\theta ){\sf d}\theta } \Big) \cdot k(Q_3, C_3) f(\theta_3 )d\theta_3  - K(Q_3, C_3) \cdot \theta_3 \cdot f(\theta_3 )d\theta_3  \\
  \displaystyle - \Big( \int_{\theta_3 }^{\theta_2 } {\theta  \cdot f(\theta ){\sf d}\theta } \Big) \cdot k(Q_2, C_2) \Big( f(\theta_2 )d\theta_2  - f(\theta_3 )d\theta_3 \Big)   - K(Q_2, C_2) \cdot \Big( \theta_2  \cdot f(\theta_2 )d\theta_2  - \theta_3  \cdot f(\theta_3 )d\theta_3 \Big) \\
   \displaystyle - \Big( \int_{\theta_2 }^{\theta_1 } {\theta  \cdot f(\theta ){\sf d}\theta } \Big) \cdot k(Q_1, C_1) \Big( f(\theta_1 )d\theta_1  - f(\theta_2 )d\theta_2 \Big)  - K(Q_1, C_1) \cdot \Big( \theta_1  \cdot f(\theta_1 )d\theta_1  - \theta_2  \cdot f(\theta_2 )d\theta_2 \Big)
\end{array} 
\end{equation}
For $p_1 > p_2 = p _3$ (i.e., $K(Q_2, C_2)=K(Q_3, C_3)$), setting $d
\theta_1 = d \theta_2 = 0$ will degenerate to the case $m=2$.  For
$p_1 = p_2 > p _3$ (i.e., $K(Q_1, C_1)=K(Q_2, C_2)$), setting $d
\theta_1 = d \theta_3 = 0$ will also degenerate to the case $m=2$.
Hence, it follows for $m=3$ is true. Using an iterative argument, we
can show that it is true for all $m \ge 2$.
 \end{proof}
 
\subsection{Proof for Theorem \ref{thm:prof}}

\begin{proof}
Similar to Theorem~\ref{thm:soc}, taking the total derivative of
$\pi$ with respect to $(d p_1, d p_2)$, we have  
\begin{equation} 
\begin{array}{@{}l}
d\pi = 
Q_2 dp_2 + p_2  \cdot f(\theta_2 )\Big(
{\frac{{\partial \theta_2 }}{{\partial p_1 }}dp_1  +
  \frac{{\partial \theta_2 }}{{\partial p_2 }}dp_2 } \Big)  + Q_1 dp_1  \\
   \qquad + p_1  \cdot \bigg(
  f(\theta_1 )\Big( \frac{{\partial \theta_1 }}{{\partial p_1 }}dp_1
    + \frac{{\partial \theta_1 }}{{\partial p_2 }}dp_2  \Big) 
-  f(\theta_2 )\Big( {\frac{{\partial \theta_2 }}{{\partial p_1 }}dp_1  + \frac{{\partial \theta_2 }}{{\partial p_2 }}dp_2 } \Big) \bigg)
\end{array}  
\end{equation}
Then, at identical pricing ($p_1 = p_2$), we obtain  
\begin{equation}
d\pi |_{p_2  = p_1 }  = Q_2 dp_2  + Q_1 dp_1 + p_1  \cdot f(\theta_1
)(\frac{{\partial \theta_1 }}{{\partial p_1 }}dp_1  +
\frac{{\partial \theta_1 }}{{\partial p_2 }}dp_2 )  
\end{equation}
However, $d\pi$ is more difficult than $d S$, involving $\frac{{\partial \theta_1 }}{{\partial p_1 }}$ and $\frac{{\partial \theta_1 }}{{\partial p_2 }}$.

As in Theorem~\ref{thm:soc}, we pick a vector $(dp_1, dp_2)$,
and show that $d\pi|_{p_1 = p_2}$ will strictly increase in the
direction of $(dp_1, dp_2)$. To achieve this we keep $\theta_1$ as a
constant (i.e., $d\theta_1 = \frac{{\partial \theta_1 }}{{\partial
p_1 }}dp_1  + \frac{{\partial \theta_1 }}{{\partial p_2 }}dp_2 =
0$). Hence  
\begin{equation}
d\pi |_{p_1 = p_2, d\theta_1 = 0}  = Q_1 dp_1  + Q_2 dp_2 
\end{equation}
Also, from Eqn.~(\ref{eqn:c3}) in Definition~\ref{def:equ}, we obtain the total
derivatives $(p_1, p_2)$ with respect to $(d p_1, d p_2)$ as 
\begin{equation} \hspace{-30pt} 
\left\{
\begin{array}{@{}r@{\ }l}
dp_1   = & - K (Q_1, C_1)d\theta_1  - \theta_1  \cdot k (Q_1, C_1)( f(\theta_1)d\theta_1  - f(\theta_2)d\theta_2) \\
dp_1  - dp_2   = &  (K (Q_2, C_2) - K (Q_1, C_1))d\theta_2  \\
&  +  \theta_2  \cdot \Big( k (Q_2, C_2)( f(\theta_2 )d\theta_2 )  - k (Q_1, C_1)( f(\theta_1)d\theta_1  - f(\theta_2)d\theta_2) \Big)
\end{array} 
\right. 
\end{equation} 
We keep $\theta_1$ as a constant (i.e., $d\theta_1 = 0$), and by identical pricing $p_1 = p_2$ $\Rightarrow$ $K(Q_1, C_1) = K(Q_2, C_2)$, we have 
\begin{equation} \label{eqn:prf4:solve}
\left\{
\begin{array}{@{}r@{\ }l@{}}
dp_1 = &  \theta_1  \cdot k(Q_1, C_1)f(\theta_2 ) d\theta_2 \\
dp_1  - dp_2 = &  \theta_2  \cdot \Big(k(Q_2, C_2) +
k(Q_1, C_1)\Big)\cdot f(\theta_2 )  d\theta_2
\end{array}
\right. 
\end{equation}
Solving Eqn.~(\ref{eqn:prf4:solve}) for $(dp_1, dp_2)$, we obtain 
\begin{equation} \label{eqn:pf4:sub}
dp_2  = \frac{ - \theta_2  \cdot (k(Q_2, C_2) + k(Q_1, C_1)) +
\theta_1  \cdot k(Q_1, C_1) }
 {\theta_1  \cdot k(Q_1, C_1)}dp_1 
\end{equation}
Because keeping $\theta_1$ as a constant (i.e., $d\theta_1 = 0$), by
substituting Eqn.~(\ref{eqn:pf4:sub}) we obtain 
\begin{subequations} \label{eqn:pf4:ineq2} \hspace{-30pt}
  \begin{align}  
& d\pi |_{p_1 = p_2, d\theta_1 = 0}  = Q_2 dp_2  + Q_1 dp_1 \\
=\ & \Big( Q_2  \cdot \big(  - \theta_2  \cdot (k(Q_2, C_2) + k(Q_1,
C_1))
+ \theta_1  \cdot k(Q_1, C_1) \big)  + Q_1 \theta_1 \cdot k(Q_1, C_1) \Big)
\frac{dp_1}{\theta_1  \cdot k(Q_1, C_1)} \notag \\
=\ &  \big( - Q_2 \theta_2 \cdot (\frac{k(Q_2, C_2)}{k(Q_1, C_1)}) - Q_2 \theta_2 + Q_2 \theta_1 + Q_1 \theta_1 \big) \frac{dp_1}{\theta_1 }  \\
>\ &  \big( - Q_2 \cdot (\frac{k(Q_2, C_2)}{k(Q_1, C_1)}) - Q_2 + Q_2 + Q_1 \big) \frac{ \theta_2 dp_1}{\theta_1 } \\
= &  \big( Q_1 \cdot k(Q_1, C_1) - Q_2 \cdot k(Q_2,
C_2) \big) \frac{ \theta_2 dp_1}{\theta_1 \cdot k(Q_1, C_1)}
 \end{align}  
\end{subequations}
Without loss of generality, we assume $Q_1 > Q_2$. 
Since the two service classes satisfy monotone preference
(Definition~\ref{def:mono}), if we always pick $d\theta_2 > 0$ in the case
of (${\sf m.1}$) and $d\theta_2 < 0$ in the case of (${\sf m.2}$),
then it is always true that $d\pi|_{p_1 = p_2, d\theta_1 = 0} > 0$.
Therefore, the provider profit $\pi$ can strictly increase by
differentiated pricing ($p_1 \ne p_2$) from identical pricing ($p_1
= p_2$).
\end{proof}

\medskip

\subsection{Derivatives of $\frac{{\sf d} \pi^{\sf I}}{{\sf d} p^{\sf I}}$}

\begin{table}[htb!]  \centering
\caption{Derivatives of all cases of $\frac{{\sf d} \pi^{\sf I}}{{\sf d} p^{\sf I}}$.\label{tab:necconds}}{
\begin{tabular}{@{}c@{ \ }|l@{}} 
\hline\hline
 & \\
 Cases &  $\frac{{\sf d} \pi^{\sf I}}{{\sf d} p^{\sf I}}$\\
  & \\
\hline
 & \\
$p^{\sf I} \ge p^{\sf II}_1 \ge p^{\sf II}_2$ & 
$\frac{1}{k_1 \theta_1} \Big(
-{p^{\sf I}}+{k_1} {Q_1} {\theta_1}$\\ 
& \quad $+\frac{{K_1} {p^{\sf I}} (-{k_2} {k_3} {\theta_2} {\theta_3}+{k_1} {\theta_1} ({K_2}-{K_3}- 2 {k_3} {\theta_3})+({K_1}-{K_2}- 2 {k_2} {\theta_2}) ({K_2}-{K_3}- 2 {k_3} {\theta_3}))}{{k_1} {k_2} {\theta_1} {\theta_2} ({K_2}-{K_3}- 2 {k_3} {\theta_3})-({K_1}+{k_1} {\theta_1}) ({k_2} {k_3} {\theta_2} {\theta_3}-({K_1}-{K_2}- 2 {k_2} {\theta_2}) ({K_2}-{K_3}- 2 {k_3} {\theta_3}))} \Big)$ \\ 
$p^{\sf II}_1 \ge p^{\sf I} \ge p^{\sf II}_2$ & 
${Q_2}+\frac{{p^{\sf I}}}{{K_2}-{K_3}- 2 {k_3} \theta_3}$ \\
& \quad $- \frac{{p^{\sf I}} ({K_1}+{k_1} \theta_1) ({K_2}-{K_3}+{k_2} \theta_2- 2 {k_3} \theta_3) (-{K_2}+{K_3}+{k_3} \theta_3)}{(-{K_2}+{K_3}+ 2 {k_3} \theta_3) ({k_2} {k_3} ({K_1}+{k_1} \theta_1) \theta_2 \theta_3+(-{k_1} {k_2} \theta_1 \theta_2-({K_1}+{k_1} \theta_1) ({K_1}-{K_2}- 2 {k_2} \theta_2)) ({K_2}-{K_3}- 2 {k_3} \theta_3))}$
\\ 
$p^{\sf II}_1 \ge p^{\sf II}_2 \ge p^{\sf I}$ &
${Q_3}+\frac{{p^{\sf I}}}{{K_2}-{K_3}- 2 {k_3} {\theta_3}}$ \\ 
& \quad $+ \frac{{k_2} {k_3} {p^{\sf I}} ({K_1}+{k_1} {\theta_1}) {\theta_2} {\theta_3}}{(-{K_2}+{K_3}+ 2 {k_3} {\theta_3}) ({k_2} {k_3} ({K_1}+{k_1} {\theta_1}) {\theta_2} {\theta_3}+(-({K_1}-{K_2}) ({K_1}+{k_1} {\theta_1})+{k_2} (_2 {K_1}+{k_1} {\theta_1}) {\theta_2}) ({K_2}-{K_3}- 2 {k_3} {\theta_3}))}$\\ 
 & \\
\hline
 & \\
$p^{\sf I} \ge p^{\sf II}$ &  
$\frac{{K_1}^2 {Q_1}+{K_2} ({p^{\sf I}}-{Q_1} ({K_1}+{k_1} {\theta_1}))+{k_2} ({p^{\sf I}}-{k_1} {Q_1} {\theta_1}) {\theta_2}+{K_1} {Q_1} ({k_1} {\theta_1}- 2 {k_2} {\theta_2})}{{K_1}^2-{k_1} {\theta_1} ({K_2}+{k_2} {\theta_2})-{K_1} ({K_2}-{k_1} {\theta_1}+ 2 {k_2} {\theta_2})}$ \\ 
$p^{\sf II} \ge p^{\sf I}$ & 
${Q_2}-\frac{p^{\sf I} ({K_1}+{k_1} {\theta_1})}{-{k_1} {k_2} {\theta_1} {\theta_2}-({K_1}+{k_1} {\theta_1}) ({K_1}-{K_2}-_2 {k_2} {\theta_2})}$ \\
 & \\
\hline\hline
\end{tabular} 
}
\end{table}

\end{document}